\newtheorem{thm}{Theorem}
\newtheorem{mydef}{Definition}
\newtheorem{lem}{Lemma}
\newtheorem*{proof}{Proof}
\newtheorem{assumption}{Assumption}
\newcommand{\citet}[1]{\citeauthor{#1} \shortcite{#1}}
\title{Mechanism Design with Predicted Task Revenue for Bike Sharing Systems}
\author{ \Large \textbf{Hongtao Lv,\textsuperscript{\rm 1} Chaoli Zhang,\textsuperscript{\rm 1} Zhenzhe Zheng,\textsuperscript{\rm 1} Tie Luo,\textsuperscript{\rm 2} Fan Wu,\textsuperscript{\rm 1}\thanks{F. Wu is the corresponding author.} Guihai Chen\textsuperscript{\rm 1}}\\ 
\textsuperscript{\rm 1} Department of Computer Science and Engineering, Shanghai Jiao Tong University, China\\
\textsuperscript{\rm 2} Department of Computer Science, Missouri University of Science and Technology, USA\\
\{lvhongtao, chaoli\_zhang, zhengzhenzhe\}@sjtu.edu.cn, tluo@mst.edu, \{fwu, gchen\}@cs.sjtu.edu.cn
}
\begin{document}

\maketitle

\begin{abstract}

Bike sharing systems have been widely deployed around the world in recent years. A core problem in such systems is to reposition the bikes so that the distribution of bike supply is reshaped to better match the dynamic bike demand. When the bike-sharing company or platform is able to predict the revenue of each reposition task based on historic data, an additional constraint is to cap the payment for each task below its predicted revenue. In this paper, we propose an incentive mechanism called {\em TruPreTar} to incentivize users to park bicycles at locations desired by the platform toward rebalancing supply and demand. TruPreTar possesses four important economic and computational properties such as truthfulness and budget feasibility. Furthermore, we prove that even when the payment budget is tight, the total revenue still exceeds or equals the budget. Otherwise, TruPreTar achieves 2-approximation as compared to the optimal (revenue-maximizing) solution, which is close to the lower bound of at least $\sqrt{2}$ that we also prove. Using an industrial dataset obtained from a large bike-sharing company, our experiments show that TruPreTar is effective in rebalancing bike supply and demand and, as a result, generates high revenue that outperforms several benchmark mechanisms.

\end{abstract}

\section{Introduction}\label{sec:intro}

Bike sharing is a new transportation mode with many benefits in offering convenience and flexibility as well as lowering economic cost. By 2015, more than 7000 bike sharing systems have been deployed around the world \cite{laporte2015shared}. However, the flexibility of bike sharing systems, in particular the ``anywhere-parking'' convenience, brings forth a serious issue of imbalance between the distribution of bike supply and demand. This leads to many users being unable to find a bicycle nearby when they need it, and ultimately affects company revenue adversely. Hence, there is an urgent need to rebalance the supply and demand by repositioning the bicycles, which we refer to as a {\em bike rebalancing problem}.

There are two approaches to solving this problem. One is to relocate bicycles by the staff of the bike-sharing company or platform, for example using trucks. This involves route planning and typically uses linear programming techniques, as has been extensively studied under static models \cite{maggioni2019stochastic,schuijbroek2017inventory} and dynamic models \cite{kek2009decision,angeloudis2014strategic}. However, this approach is costly and not eco-friendly in terms of carbon footprint.

Another approach is to design incentive mechanisms to motivate users to reposition bicycles at platform-desired locations for rebalancing supply and demand. This falls under the research area of crowdsourcing \cite{fricker2016incentives} which takes advantage of the power of crowd to complete tasks that are otherwise difficult \cite{luo2016incentive}.
With this idea, \citet{ghosh2017incentivizing} proposed a solution that generates repositioning tasks together with the use of bike trailers, and pays users using the Vickrey-Clarke-Groves (VCG) mechanism. \citet{singla2015incentivizing} introduced a crowdsourcing method that offers users monetary incentive for parking their bikes at recommended locations.

However, most of existing mechanisms for this problem have not considered the predicted revenue (or value) of a repositioning task. Such predicted values are made available in recent years, due to the prosperity of deep learning technique which can well predict the bike demand in such systems \cite{yang2016mobility,li2019citywide,Zhang2016DNN}. With this predictive power, the expected revenue or value of a repositioning task can be easily obtained. Therefore, the platform would assign a task to a user only if the payment to the user does not exceed the predicted revenue. This constraint is akin to the ``reserve price" in forward auctions which are used to promote revenue. In our problem, it is used to better control the payments and thereby increase the profit of the platform.
This constraint is generally overlooked in prior work on bike sharing or crowdsourcing \cite{angelopoulos2018incentivized,goel2013matching,vaze2017online}. For instance with classic budget feasible mechanisms such as \cite{singer2010budget}, the payment to a winning user is determined by the bids of other users and/or the budget. But in our case, this payment is invalid if it is above the value of the task assigned to the winning user.

In this paper, we propose a solution called \underline{Tru}thful and budget-feasible incentive mechanism with \underline{Pre}dicted \underline{Ta}sk \underline{r}evenue (TruPreTar) for the bike rebalancing problem, which we model as a reverse auction. In this auction, users bid for repositioning tasks and the platform determines task allocation and user payments. We first show that several widely used auctions and pricing mechanisms do not apply to this problem. Then we present the design of TruPreTar and show that it is an effective solution that satisfies several desired economic and computational properties. In addition, we prove two important theoretical guarantees backed by our mechanism on revenue. Finally, we evaluate the performance of TruPreTar via simulations using a real industrial dataset obtained from a large bike-sharing company.

Our main contributions are summarized as follows:
\begin{itemize}
\item We model the bike rebalancing problem under a crowdsourcing framework using a reverse auction model.
    Importantly, we incorporate a bipartite graph into the auction---with a payment constraint---to determine the allocation rule and the payment rule for the auction. The payment constraint also implies the establishment of a connection between (bike) demand prediction and incentive mechanism design.
\item We propose a mechanism TruPreTar which satisfies desired properties including incentive compatibility, budget feasibility, individual rationality, and computational efficiency. Notably, we achieve this by novelly combining Myerson's Lemma and a greedy weighted maximum matching technique.
\item We prove two important theoretical guarantees for TruPreTar: When the budget is tight, it ensures that the platform revenue is no less than its budget, under a practical large-market assumption. When the budget is sufficient, TruPreTar achieves a 2-approximation ratio as compared to the optimal solution which maximizes revenue. We show that the lower bound of the approximation ratio is at least $\sqrt{2}$ which means our ratio is rather close. Putting in practice perspectives, we provide a guideline of how to use this result to set company budget in real dynamic systems.
\item We evaluate the effectiveness of our mechanism via extensive experiments using a real dataset from a large bike-sharing company. The results show that our mechanism outperforms other benchmark mechanisms in terms of revenue and profit.
\end{itemize}

\section{Model}

Consider a dynamic bike sharing system in which a batch of $n$ users $N=\{1,2,\dots,n\}$ have hired their bicycles and not parked them yet. There is also a set of $m$ discrete bicycle parking lots or locations\footnote{We do not consider continuous locations because many countries such as Singapore and China have nowadays stipulated municipal regulations to restrict shared bikes to be parked at designated locations rather than arbitrarily.} $L=\{1,2,\dots,m\}$, where each location can accommodate multiple bicycles but is conceptually considered a point on a map. We assume that each user's destination is known to the platform when she hires a bicycle, since it can be either reported by the user, as is the practice adopted by some companies (e.g., HelloBike in China), or predicted using historic information \cite{liu2018two}.  The platform aims to incentivize the users to park their bicycles at system-desired locations for rebalancing the supply and demand of bicycles. Each user has a maximum relocation range $h$, out of which they would not accept a repositioning task; in other words, $h$ is the maximum ``extra mile'' they are willing to relocate.

Similar to the concept of \emph{first come first served flow} in \cite{waserhole727040vehicle}, we assume that a bicycle that is parked earlier will have a higher probability to be hired than a bicycle parked later at the same location. As such, each bicycle has a different probability of being hired. Accordingly, we define a repositioning task $t_{lx}$ as ``park your bicycle at location $l$ as the $x$-th bicycle", and it is associated with an expected revenue $r_{lx}$. As mentioned earlier, $r_{lx}$ can be derived from bike demand predication at the location \cite{yang2016mobility,li2019citywide}.

To assign the tasks to users, we construct a bipartite graph $G=\{N,T,E\}$, where the left nodes are the set of users $N$, and the right nodes are the set of tasks $T$. Since there are maximum $n$ possible tasks at each location, there are totally $m\times n$ tasks in $T$. The set $E$ is the edges between users and tasks indicating the {\em feasibility} of assignment: an edge between $i$ and $t_{lx}$ means that location $l$ is within the maximum relocation range of user $i$. For notation simplicity, henceforth we use $j$ to denote a task $t_{lx}$ when there is no ambiguity.

Each user $i$ has a relocation cost $c_i$ which is a private value known to user $i$ only. She bids for a task and claims a cost $b_i$ which is not necessarily equal to $c_i$. Each task $j$ has an expected revenue (or ``value" as we use interchangeably) $r_j$ for the platform. We aim to design an incentive mechanism that consists of an allocation rule and a payment rule, where the allocation rule specifies which task is allocated to which user, i.e., the matched pairs $(i,j)$, and the payment rule specifies a payment $p_i$ for each matched user $i$. The utility of a user is defined as $p_i - c_i$. In addition, the platform has a budget $B$ for all the rebalancing tasks.
We want our mechanism to satisfy the following desirable properties:
\begin{itemize}
\item \emph{Incentive Compatibility}: a user can maximize her utility only by bidding truthfully, i.e., $b_i=c_i$. This property is also known as \emph{truthfulness} or \emph{strategy-proofness}.
\item \emph{Individual Rationality} for both users and platform: the payment to each winning (i.e., matched) user should be no less than her cost, i.e., $p_i\ge c_i$; the payment for each matched task should also be no more than its value, i.e., $p_i \le r_j$ ($r_j$ is similar to the \emph{reserve price} in the mechanism design theory).
\item \emph{Budget Feasibility}: the overall payment should be no more than the budget, i.e., $\sum_{(i,j)\in M}p_i \le B$.
\item \emph{Computational Efficiency}: the mechanism should terminate in polynomial time.
\end{itemize}

Our objective is to maximize the platform revenue $R=\sum_{(i,j)\in M}r_j$, where $M=\{(i,j)\}$ is the set of matched user-task pairs.
Thus, the problem of revenue maximization via task allocation can be formulated as
\begin{align}\label{eq:problem}
 {\bf max }\quad R = &\sum_{(i,j)\in M} r_j\\
  {\bf s.t. } \quad p_i \le &r_j, \quad \forall (i,j) \in M  \notag \\
   p_i \ge &b_i, \quad \forall (i,j) \in M  \notag \\
   \sum_{(i,j)\in M} &p_i \le B \notag
\end{align}

In addition to revenue, we also evaluate platform profit which is defined as
\[ Pr=\sum_{(i,j)\in M}r_j-\sum_{(i,j)\in M}p_i. \]

\subsection{Infeasibility of Existing Mechanisms}

In this section, we show that some widely used mechanisms are not feasible for the bike rebalancing problem.

\subsubsection{VCG mechnism.} This is a classical mechnism that is strategy-proof and maximizes social welfare. However, it does not guarantee budget feasibility as required in our case. Proof by counter-example: see Figure \ref{fig:a} (a), where the platform has budget $1$ and there are two users $\{a, b \}$ both with a small cost $\epsilon$; the two tasks $\{1, 2 \}$ both have value $1$ and there are two edges $\{(a,1),(b,2)\}$. The VCG mechanism will output matching $M=\{(a,1),(b,2)\}$ and the payment for each user is 1. 
Thus, the overall payment exceeds the budget and hence the mechanism is not budget feasible.

\begin{figure}[h]
\centering
\subcaptionbox{$B=1$}{\includegraphics[width=0.49\columnwidth]{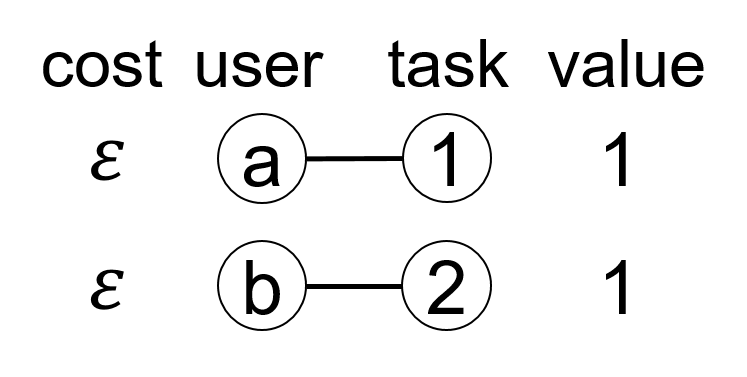}}
\hfill
\subcaptionbox{$B\gg5$}{\includegraphics[width=0.49\columnwidth]{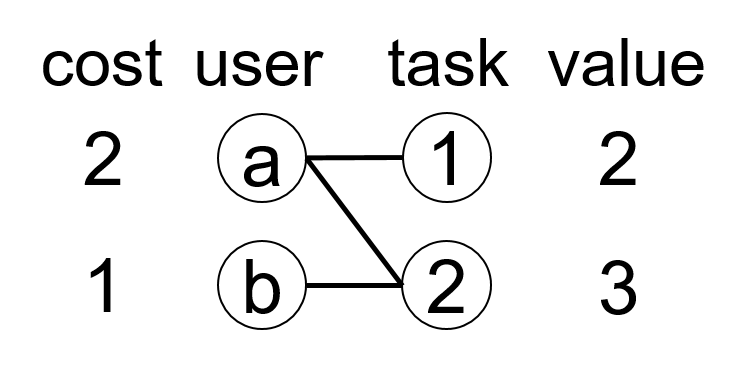}}
\vfill
\subcaptionbox{$B\gg6$}{\includegraphics[width=0.49\columnwidth]{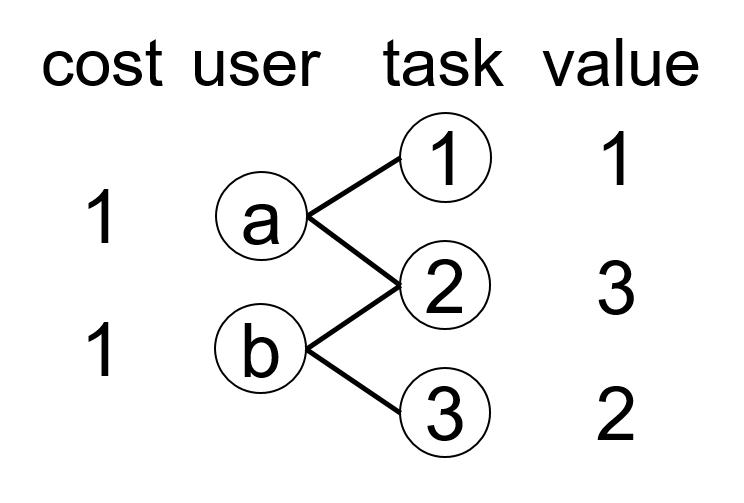}}
\caption{Counter examples for existing mechanisms.}
\label{fig:a}
\end{figure}

\subsubsection{Singer's mechanism.}
Proposed in \cite{singer2010budget}, this is another well-known mechanism yet is budget feasible. It greedily allocates user-task pairs with the highest ratio of $r_j/b_i$, but does not consider the individual rationality of the platform (i.e., $p_i\le r_j$). If we adapt the mechanism by adding the constraint into the payment rule, it will no longer guarantee incentive compatibility. To illustrate this, see Figure \ref{fig:a} (b), the platform has sufficient budget, and there are two users $\{a, b \}$ with costs $c_a=2,c_b=1$, two tasks $\{1, 2\}$ with values $r_1=2, r_2=3$, and three edges $\{(a,1),(a,2),(b,2)\}$. The adapted mechanism outputs match $M=\{(a,1),(b,2)\}$, and $p_a=2$. However, if user $a$ misreports her cost to be a small number $\epsilon$, she will be allocated task $2$ and her payment becomes $3$, making her better off.

\subsubsection{Optimal matching.}
The third method is to choose a subgraph with $\sum r_j\le B$ and then use the optimal (i.e., maximum) matching (that maximizes platform revenue) to allocate tasks, where the payments to winners are set to the values of their matched tasks. Not only does this mechanism has zero profit for the platform, but more importantly, it is also not truthful. To see this, consider Figure \ref{fig:a} (c), where the budget is sufficient, and there are two users $\{a, b \}$ with costs $c_a=1,c_b=1$, three tasks $\{1, 2, 3\}$ with values $r_1=1, r_2=3, r_3=2$, and four edges $\{(a,1),(a,2),(b,2),(b,3)\}$. The optimal matching will output match $M=\{(a,2),(b,3)\}$ since it achieves maximized revenue and the payments are $p_a=3$ and $p_b=2$. However, if user $b$ untruthfully bids a cost of 3 instead of 1, then she will be assigned task 2 instead of 3 because task value must be no less than the cost. Therefore, she will receive a higher payment than bidding truthfully. Therefore, incentive compatibility is violated.

In fact, we will prove in Theorem \ref{thm:lowerbd} that there is no optimal mechanism that can satisfy the four properties simultaneously. Therefore, inspired by \cite{zhang2018efficient}, we propose an approximate mechanism (i.e., TruPreTar) in this paper.

\section{Mechanism Design of TruPreTar}

In this section, we present our proposed mechanism TruPreTar for the bike rebalancing problem. We first introduce a notion called {\em right-perfect matching} in a bipartite graph.

\begin{mydef}
A right-perfect matching in a bipartite graph $G$ is a matching with size $|T|$, where $T$ is the set of right nodes in graph $G$.
\end{mydef}

In other words, we say a bipartite graph has a right-perfect matching if all tasks in the graph on the right can be matched to a user on the left. It is easy to see that a right-perfect matching is also a maximum matching of a bipartite graph.

\begin{algorithm}[!t]
\caption{TruPreTar: a truthful and budget feasible incentive mechanism with predicted task revenue}
\label{alg1}
\KwIn{Bipartite graph $G=(N, T, E)$, budget $B$, cost $c_i$, value $r_j$, $\forall i \in N,\forall j \in T$.}
\KwOut{Task allocation $M=\{(i, j)\}$ and payment $p_i$ for each winning user $i$.}
Let $F=N\cup T$. For an element $e\in F$, if $e$ is a user $i$, the value $v_e$ is defined as her bid $b_i$; if $e$ is a task $j$, the value $v_e$ is defined as the expected revenue $r_j$.

Delete all edges $(i,j)$ with $b_i>r_j$ in $G$.

$M \leftarrow \emptyset$, $B'\leftarrow B$, $T'\leftarrow \emptyset$, $N'\leftarrow \emptyset$, $E'\leftarrow \emptyset$, $G'=(N',T',E')$;

Sort elements in $F$ in decreasing order of $v_e$, breaking ties randomly, but if the tie is between a task and a user, let the task go first.

\For {each element $e$ in the above order}{
\If{$e$ is a task $j$}{
Let $N_j$ be the set of unmatched users connected to $j$, $E_j$ be the set of edges between $j$ and $N_j$.

\If{$G' \cup E_j$ has a right-perfect matching and \\ $(|T'|+1)\cdot r_j\le B'$}{
$G'\leftarrow G'\cup (N_j,{j},E_j)$.

$P\leftarrow r_j$.

\Else{
Skip to next element.
}}}

\If{$e$ is a user $i$ in $G'$}{
\If {$G'\backslash i$ has a right-perfect matching}{
$G'\leftarrow G'\backslash i$.

$P\leftarrow b_i$.
}}
\For{Each user $i\in G'$}{\label{checkstart}
\If {$G'\backslash i$ doesn't have a right-perfect matching}{
\For{Each edge $(i,j)$ of $i$ in $G'$}{
\If {$G'\backslash i\cup (i,j)$ has a right-perfect matching}{
$M\leftarrow M\cup (i,j)$, $p_i\leftarrow P$.

$B'\leftarrow B'-p_i.$

$G'\leftarrow G'\backslash\{i,j\}$.

Skip to next user.

\Else{
Skip to next edge.
\label{checkend}
}}}}
}
}
\end{algorithm}

The key idea of TruPreTar is to maintain a subgraph $G'$ that always has a right-perfect matching. TruPreTar sorts all the tasks and users together in decreasing order of their values (costs), and then iterates over this sorted list. The mechanism tries to include more tasks with high values in $G'$ and delete more users with high costs from it until the budget is exhausted or all elements are processed.

The complete pseudo-code of the mechanism TruPreTar is presented in Algorithm \ref{alg1}. In the mechanism, we maintain a variable of the remaining budget $B'$ which is initially set as $B$. Once a user $i$ is matched with payment $p_i$, we update $B'$ as $B'-p_i$. In addition, we update a decreasing global price $P$ during the algorithm process.
Intuitively, for each element in the iteration of the sorted list, if it is a task $j$, the task and its affiliated edges (the connected user should be unmatched) will be added to $G'$ if two conditions are satisfied after adding them: 1) $G'$ still has a right-perfect matching and 2) the upper bound of the payment for all tasks in $G'$ (i.e., $(|T'|+1)\cdot r_j$) is below the remaining budget. If task $j$ is added to $G'$, we update the global price $P$ as $r_j$.
If the element is a user $i$, then if $G'$ can maintain a right-perfect matching after discarding $i$ (i.e., $i$ is not \emph{critical} for the right-perfect matching of $G'$), we then remove her from $G'$, in this case, we also update the global price $P$ as $b_i$. Once the subgraph $G'$ is changed (either a task is added or a user is removed), we check if there are critical users for the right-perfect matching of $G'$, if so, for a critical user $i$, we allocate task $j$ in $G'$ to $i$ if $G'\backslash i\cup (i,j)$ has a right-perfect matching. The payment is set as the global price $P$ at this step, and then we update the remaining budget $B'$.

A walk-through example of TruPreTar is given in Figure \ref{fig:1}.

\begin{figure}[!tb]
\centerline{\includegraphics[width=0.65\columnwidth]{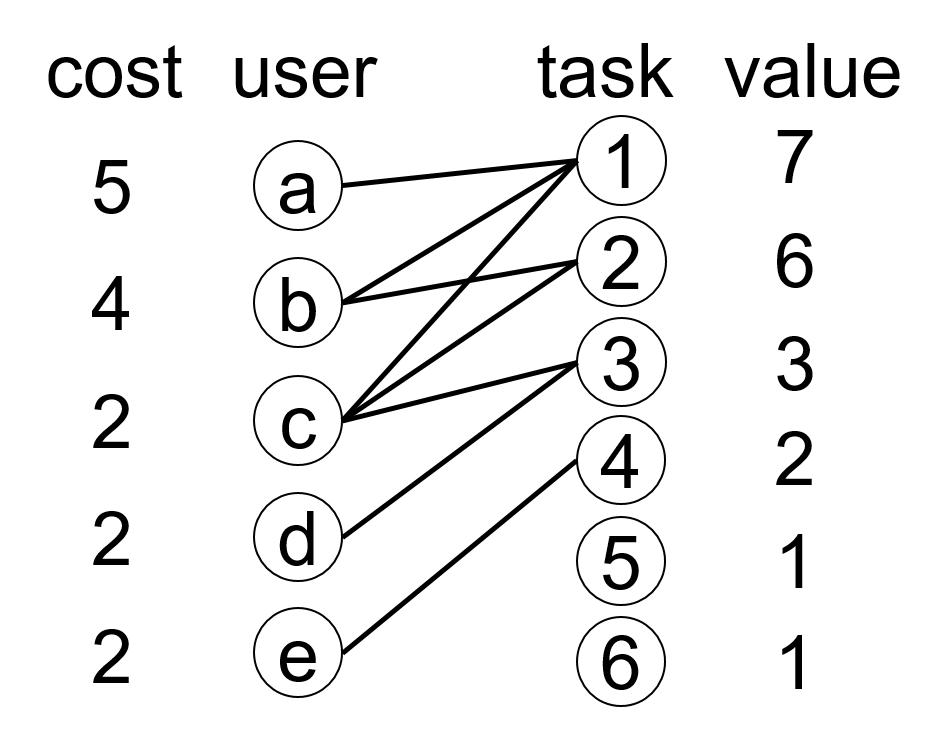}}
\caption{A walk-through example for our mechanism TruPreTar. Let the budget be $B=14$. Tasks 1 and 2 are first considered, and they are added to $G'$ with their affiliated edges since the budget is enough for them. So $G'$ has three users $a,b$ and $c$ and none of them is critical for the right-perfect matching of $G'$. In addition, the global price $P\leftarrow6$, and the remaining budget $B'$ is still $14$. Then, user $a$ is removed from $G'$ while updating the price $P\leftarrow 5$. Next, users $b$ and $c$ become critical for the right-perfect matching of $G'$ now, and assume they are matched with task $1$ and $2$, respectively. The payment to each of $b$ and $c$ is $P=5$, and hence we update $B'\leftarrow 4$. After that, task $3$ will be added to $G'$ with edge $(d,3)$, we have $P\leftarrow 3$, and obviously $d$ is critical and she will be allocated task $3$ with payment $3$. The remaining budget is also updated as $B'\leftarrow 1$. Finally, task 4 is considered but not added to $G'$ because of the budget constraint, and tasks 5 and 6 are skipped because they have no edges. Therefore, the output matching is $\{(b,1),(c,2),(d,3)\}$, and the revenue is 16 while the profit is 3.}
\label{fig:1}
\end{figure}

\section{Analysis of TruPreTar}

\begin{lem}
The mechanism TruPreTar satisfies incentive compatibility.
\end{lem}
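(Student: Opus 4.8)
The plan is to apply Myerson's Lemma. Each user $i$ is a single-parameter agent whose only private information is the scalar cost $c_i$, so it suffices to show that, holding the bids of all other users fixed, Algorithm~\ref{alg1} is \emph{monotone} --- if $i$ is matched to some task when she bids $b_i$, she is still matched to some task when she bids any $b_i'<b_i$ --- and \emph{threshold-paying} --- the payment $p_i$ of a matched user equals her critical bid $\theta_i$, the supremum of the bids with which she would still be matched, and in particular does not depend on $b_i$. Granting these, incentive compatibility follows by the standard argument: bidding above $\theta_i$ makes $i$ lose and obtain utility $0$, while bidding at or below $\theta_i$ makes $i$ win and pay exactly $\theta_i$, giving utility $\theta_i-c_i$, which is nonnegative precisely when $c_i\le\theta_i$ and is in both cases weakly best for $i$. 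We fix the tie-breaking, so the mechanism is deterministic and single-parameter.

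\textbf{Monotonicity.} Fix the other bids and let $b_i'<b_i$. Lowering $i$'s bid changes the run of the algorithm in only two ways: it can only enlarge the edge set incident to $i$, since an edge $(i,j)$ is deleted exactly when $b_i>r_j$; and it moves $i$ to a later position in the sorted list $F$. I would compare the $b_i$-execution and the $b_i'$-execution by induction over the processed elements, maintaining the invariant that, up to the iteration in which $i$ is matched in the $b_i$-execution, the subgraph $G'$ in the $b_i'$-execution still admits a right-perfect matching, contains at least the tasks present in the $b_i$-execution, leaves the same remaining budget $B'$, and has not removed $i$. The two facts that drive the induction are: extra edges for $i$ can only help $G'$ keep a right-perfect matching and can only enlarge the set of tasks assignable to $i$ in the critical-user loop (lines~\ref{checkstart}--\ref{checkend}); and, being processed later, $i$ becomes a candidate for removal only after weakly more tasks have entered $G'$, so she is at least as likely to be critical --- hence matched --- when her turn finally comes. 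Consequently $i$ is still matched under $b_i'$.

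\textbf{Payment equals the threshold.} The payment $p_i$ is the value of the global price $P$ at the iteration in which $i$ becomes critical and is matched; by inspection of the algorithm $P$ is always either the revenue $r_j$ of the most recently added task or the bid $b_{i'}$ of the most recently removed user, and since a matched user is never herself removed, $p_i$ depends only on the other users' bids and the task revenues, not on $b_i$. It remains to identify $p_i$ with $\theta_i$. For $\theta_i\le p_i$: if $i$ raised her bid strictly above $p_i$ she would be placed strictly before the element that set $P=p_i$, and I would show that in this modified run $i$ is skipped or removed as non-critical before the price ever drops to $p_i$ --- because at that earlier stage too few tasks of $G'$ are present (and some edges of $i$ may have been additionally deleted, as $b_i>r_j$ for the now-forbidden tasks) to force $i$ into the matching --- so $i$ loses. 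For $p_i\le\theta_i$, note that by monotonicity $i$ wins with every bid $b_i\le p_i$; a coupling argument as above --- the only elements whose order relative to $i$ changes lie in the value range between $b_i'$ and $b_i$, and the invariant forces $G'$, $B'$ and the removed users to be identical at the step $i$ is matched --- shows the price she is matched at is still $p_i$, so $\theta_i\ge p_i$. Hence $p_i=\theta_i$.

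\textbf{Main obstacle.} The crux is the coupling argument underlying both the monotonicity proof and the ``$i$ loses above $p_i$'' direction of the payment characterization. Because a change in $b_i$ moves $i$ inside the single sorted list $F$, the two executions can a priori diverge in which tasks are added, which users are removed, and in what order the critical-user sub-loop (lines~\ref{checkstart}--\ref{checkend}) fires; one must verify that every such divergence is harmless for $i$. This is complicated by the budget test $(|T'|+1)\cdot r_j\le B'$: a different matching history changes the remaining budget $B'$, hence which later tasks can enter $G'$, so the invariant must be strong enough to control $B'$ as well. Carrying this bookkeeping through rigorously --- in particular proving that the tasks present in $G'$ and the price $P$ at the moment $i$ is matched are invariant under every bid $b_i\le\theta_i$ --- is the technical heart of the proof.
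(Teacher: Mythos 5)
Your proposal takes the same route as the paper: Myerson's Lemma, with monotonicity and the threshold-payment property established by coupling the execution under $b_i$ with the execution under a deviating bid $b_i'$. The outline is right, but the ``main obstacle'' you flag --- tracking how the two runs could diverge in $G'$, in $B'$, and in the critical-user loop of lines~\ref{checkstart}--\ref{checkend} --- dissolves once you make the single ordering observation that the paper's proof rests on. Let $e$ be the element whose processing sets the global price $P$ to $p_i$ and at which $i$ is matched. Every element processed before $e$ has value at least $p_i\ge b_i$, and every edge $(i,j)$ whose deletion status (the test $b_i>r_j$) could differ between the two runs satisfies $r_j<\max(b_i,b_i')$. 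Hence for any deviating bid $b_i'\le p_i$, neither user $i$ herself nor any edge affected by her bid is touched before $e$ is processed, so the two executions are \emph{literally identical} --- same $G'$, same $B'$, same matched pairs, same price trajectory --- up to and including $e$; user $i$ is then critical in both runs and is matched at the same price $p_i$. This yields monotonicity and $\theta_i\ge p_i$ simultaneously, with no induction on an approximate invariant and no budget bookkeeping, because there is nothing to book-keep: no divergence can occur before the moment $i$ is matched. For $b_i'>p_i$ the same identity-of-executions argument applies up to the iteration at which $i$ is reached in the deviating run; at that moment $i$ cannot be critical for the right-perfect matching of $G'$ (otherwise she would already have been matched, at a weakly higher price, in the original run), so she is removed and never matched, giving $\theta_i\le p_i$. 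Your plan as written would likely get there eventually, but it treats as an open induction with a delicate invariant what is really one observation about where $i$ and her bid-dependent edges sit in the sorted order relative to $e$.
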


\begin{proof}
Since each user has only one private value (i.e., cost), this is a single-parameter problem and hence we can use the Myerson's lemma:
\begin{lem}[\cite{myerson1981optimal}]
\label{lemma1}
In single parameter auctions, for a normalized mechanism $\mathcal{M}=(f, p)$, where $f$ is the allocation rule and $p$ is the payment rule, $\mathcal{M}$ is incentive compatible iff it satisfies:
\begin{enumerate}
  \item \textbf{Monotone allocation rule:} $\forall i \in N$, if $b'_i\le b_i$, then $i\in f(b_i, b_{-i})$ implies  $i\in f(b'_i,b_{-i})$ for every $c_{-i}$;
  \item \textbf{Threshold payment rule:} payment to each winning bidder is $\inf {\{b_i: i\notin f(b_i,b_{-i}) \}}$.
\end{enumerate}
\end{lem}
First, we prove the monotone allocation rule, i.e., once user $i$ is matched by bidding $b_i$, she must be matched by bidding $b'_i<b_i$. Let $e$ be the element that updates the global price $P$ as $p_i$.
When bidding $b_i$, we use $G^1_i$ to denote the graph $G'$ after $e$ is processed (i.e., the step that $P$ is updated as $p_i$), and $G^2_i$ has same definition while in the case of bidding $b'_i$. It's obvious that the algorithm process before $P$ is updated as $p_i$ is not affected by the bid of $i$. Therefore, we have $G^1_i=G^2_i$, and user $i$ is also critical for the right-perfect matching of $G^2_i$, hence the monotone allocation rule is satisfied.

Next, we prove the threshold payment rule, that is, if user $i$ bids any cost larger than the payment $p_i$, she will not be matched, otherwise, she will be matched with a task. If $b'_i < p_i$, it can be observed that the value of all tasks in $G^2_i$ is higher than $b'_i$, thus the edges of $i$ will not be deleted because $b'_i>r_j$. Therefore, similar to the proof of the monotone allocation rule, we have that $G^1_i=G^2_i$, and user $i$ is still critical, so she will be matched.
If $b'_i > p_i$, user $i$ will be considered before the element $e$. However, we can observe that, in the steps before element $e$ is processed, user $i$ is either not added to $G'$ or is not critical for the right-perfect matching of $G'$, otherwise she will be matched before element $e$. As a result, we have that user $i$ will be discarded if $b'_i > p_i$, and the threshold payment rule is satisfied.
\end{proof}
\begin{lem}
The mechanism TruPreTar is budget feasible.
\end{lem}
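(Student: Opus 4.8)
The plan is to exhibit a single invariant relating the remaining budget $B'$, the global price $P$, and the current subgraph $G'$, that keeps $B'$ nonnegative throughout. Writing $T'$ for the set of still-unmatched tasks currently in $G'$, I claim that at every point of the execution $B' \ge |T'|\cdot P$ holds, with the reading $B'\ge 0$ during the initial phase before any task has been added to $G'$ (when $P$ is still undefined and $|T'|=0$). Given this, budget feasibility is immediate: $B'$ is decremented only when a user $i$ is matched, and then by exactly $p_i$, so at termination $B - B'_{\mathrm{final}} = \sum_{(i,j)\in M} p_i$; since the invariant yields $B'_{\mathrm{final}}\ge |T'_{\mathrm{final}}|\cdot P\ge 0$, we conclude $\sum_{(i,j)\in M} p_i \le B$.

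To establish the invariant I would first record that $P$ is non-increasing over the run: it is only ever (re)set to $r_j$ when a task $j$ is added or to $b_i$ when a user $i$ is removed, i.e.\ always to the value $v_e$ of the element currently being scanned, and the elements of $F$ are scanned in non-increasing order of $v_e$. Then I would proceed by induction over the elementary updates to $(G',B',P)$, with one case for each way they change: (i) \emph{a task $j$ is added}: the algorithm's own admission test is exactly $(|T'|+1)\,r_j\le B'$, and after the update $|T'|$ grows by one while $P$ becomes $r_j$, so the new right-hand side is precisely $(|T'|+1)\,r_j\le B'$; (ii) \emph{a user $i$ is removed from $G'$}: $|T'|$ and $B'$ are unchanged and $P$ can only decrease, so the inequality is preserved; (iii) \emph{a critical user $i$ is matched and its task deleted}: $|T'|$ drops by one and $B'$ drops by $P$, so $B'\ge|T'|\cdot P$ becomes $B'-P\ge(|T'|-1)\cdot P$; (iv) \emph{an element is skipped} (a task rejected by the test, or a scanned user not in $G'$): nothing changes.

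The one place that calls for care is iterating case (iii) inside the inner \texttt{for}-loop, where several critical users may be matched consecutively at a common price $P$: here I would observe that each such matching deletes a \emph{distinct} task and a \emph{distinct} user from $G'$, so after $s$ of them $|T'|$ has decreased by exactly $s$ and $B'$ by exactly $sP$, and the telescoped one-step bounds keep $B'\ge|T'|\cdot P$ valid throughout; this in turn uses the fact (needed also for correctness) that a user critical for a right-perfect matching always admits an edge $(i,j)$ with $G'\backslash i\cup(i,j)$ right-perfect, so the inner loop genuinely matches it and removes its task. I expect this bookkeeping --- identifying the ``$|T'|$'' appearing in the admission test $(|T'|+1)\,r_j\le B'$ with exactly the quantity that is subsequently drawn down, one payment of size $r_j$ per matched task --- to be the main (if modest) obstacle; the remaining cases are routine.
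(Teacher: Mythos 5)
Your proof is correct and rests on the same two facts as the paper's own argument: the admission test $(|T'|+1)\cdot r_j\le B'$ and the monotone decrease of the global price $P$. The only difference is packaging --- you carry the bound $B'\ge|T'|\cdot P$ as a loop invariant through every step, whereas the paper instantiates essentially the same inequality once, at the moment the last task is added to $G'$, and bounds all subsequent payments by that task's value; your version makes the intermediate bookkeeping more explicit but is not a different argument.
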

\begin{proof}
We can observe that once a task $j$ with value $r_j$ is added to $G'$, the payment to any user in $G'$ is no more than $r_j$, since the global price $P$ is non-increasing. Let $j'$ be the last task added to $G'$, $B'_{j'}$ the remaining budget before $j'$ is added, and $|T_{j'}|$ the number of tasks in $G'$ before adding $j'$. We have that $$\sum_{(i,j)\in M}p_i \le B- B'_{j'}+r_{j'}\cdot (|T_{j'}|+1) \le B$$ which concludes the proof.
\end{proof}

We omit the proofs of the following two lemmas due to space constraint, please refer to our full paper\footnote{https://arxiv.org/abs/1911.07706}.
\begin{lem}
The mechanism TruPreTar is individually rational for both users and platform.
\end{lem}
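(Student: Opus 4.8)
The plan is to split the statement and prove each half with as little fresh work as possible, borrowing from the incentive-compatibility argument for the users' side and from simple bookkeeping on the global price for the platform's side. For the users' side ($p_i\ge c_i$ for every matched $i$), I would derive it as a corollary of the threshold-payment characterization already verified when proving incentive compatibility: by Myerson's lemma, TruPreTar's allocation rule is monotone and the payment to a winning user equals the threshold bid $\inf\{b: i\notin f(b,b_{-i})\}$; combining these, if $i$ wins at her truthful bid $b_i=c_i$, monotonicity forces her to win at every strictly smaller bid as well, so $c_i$ cannot exceed that threshold, i.e. $c_i\le p_i$. No new analysis of the matching dynamics is needed here.

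For the platform's side ($p_i\le r_j$ for every matched pair $(i,j)\in M$), I would track the global price $P$. Three facts suffice: (i) $p_i$ equals the value of $P$ at the iteration in which critical user $i$ gets matched to $j$ in the inner loop; (ii) at that iteration $j$ is still a right node of $G'$, and because a task leaves $G'$ only upon being matched, $j$ has stayed in $G'$ ever since it was inserted --- the very step at which the algorithm executed $P\leftarrow r_j$; and (iii) $P$ never increases, because every later update of $P$ resets it to the value of a later-scanned task or to the bid of a later-scanned user, and elements are scanned in non-increasing order of value. Chaining (iii) back to the moment in (ii) gives current $P\le r_j$, hence $p_i=P\le r_j$.

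I expect the only real subtlety to be resisting the temptation to prove users' IR directly, i.e. by arguing that $P$ never drops below $b_i$ before $i$ is matched. That route is awkward: $P$ genuinely decreases over the run, and a user can linger in $G'$ for many iterations before becoming critical, so lower-bounding $P$ at the matching step would demand an alternating-path analysis of how task insertions and user deletions cascade through the right-perfect matching. Routing users' IR through Myerson's threshold characterization sidesteps this entirely, which is why I would establish it that way first and keep the platform-side part as the short price-monotonicity bookkeeping above.
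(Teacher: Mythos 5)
Your proposal is correct, and the platform-side half ($p_i\le r_j$) is essentially the paper's own argument: the payment is the global price $P$ at the matching step, $P$ was set to $r_j$ when $j$ entered $G'$, and $P$ only decreases thereafter because elements are scanned in non-increasing order of value. Where you diverge is the users' side. The paper proves $p_i\ge b_i$ directly from the algorithm's dynamics: it observes that $P$ always dominates the bids of all users still present in $G'$, because a user whose bid exceeds the current $P$ would already have had her turn in the sorted scan, and a user's turn always expels her from $G'$ --- she is deleted if non-critical, or matched and removed by the check loop if critical. So no alternating-path analysis is actually needed; the ``lingering'' users you worry about all have bids at most the current $P$, precisely because they have not yet been scanned. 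Your alternative --- deriving $p_i\ge c_i$ as a corollary of the monotone-allocation-plus-threshold-payment characterization already verified for incentive compatibility --- is a legitimately different and standard route (in a normalized single-parameter mechanism, threshold payments imply winner IR), and it buys robustness: it works without re-examining the matching dynamics at all, and would survive changes to the algorithm that preserve monotonicity and threshold pricing. What it costs is a dependency on the IC lemma being proved first and proved via that exact characterization, whereas the paper's version is self-contained and symmetric with its argument for the task side.
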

\begin{proof}
We need to prove that when matching a user $i$ with a task $j$, the payment to user $i$ should be larger than $b_i$ and smaller than $r_j$. It can be observed that, at any step during the algorithm process, the global price $P$ is always smaller than the values of all tasks in $G'$, and any tasks with lower value should not be considered until this step. Similarly, $P$ is also always larger than the costs of all users in $G'$, otherwise the user with higher cost should be either discarded or matched before. Therefore, we have that for any matched pair $(i,j)$, $b_i\le p_i \le r_j$, which concludes our proof.
\end{proof}

\begin{lem}
The mechanism TruPreTar satisfies computational efficiency.
\end{lem}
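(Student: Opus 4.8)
The plan is to bound the total running time as (number of iterations of the main loop) $\times$ (work per iteration), after observing that every primitive test the algorithm performs—crucially, every query of the form ``does $G'$ have a right-perfect matching?''—is itself solvable in polynomial time.

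First I would record the input sizes. The graph $G=(N,T,E)$ has $|N|=n$ left nodes, $|T|=mn$ right nodes (one task $t_{lx}$ for each of the $m$ locations and each of the $n$ possible ranks $x$), and $|E|\le n\cdot mn=mn^2$ edges, so $|F|=n+mn=O(mn)$. The preprocessing—deleting edges with $b_i>r_j$ and sorting $F$ by value—costs $O(mn^2+mn\log(mn))$. The main loop processes each element of $F$ exactly once, hence runs $O(mn)$ times; since $G'$ is modified only while processing the current element and is never reset, the algorithm terminates.

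Next I would analyse one iteration. A bipartite graph $H$ has a right-perfect matching iff a maximum-cardinality matching of $H$ has size $|T_H|$, which is decidable in $O(|E_H|\sqrt{|V_H|})$ time by Hopcroft--Karp (or $O(|V_H|\,|E_H|)$ by iterated augmenting paths)—polynomial in $m,n$ since $|V_H|=O(mn)$ and $|E_H|=O(mn^2)$. If the processed element is a task, the iteration does one such check plus $O(1)$ arithmetic for the budget test $(|T'|+1)\cdot r_j\le B'$; if it is a user, one such check. The dominant cost is the criticality block: for each of the $O(n)$ users $i\in G'$ we run one matching check on $G'\setminus i$, and for each critical $i$ at most $\deg(i)$ further checks on $G'\setminus i\cup(i,j)$, one per incident edge; summing over users gives $O(n+|E|)=O(mn^2)$ matching checks per iteration. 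Multiplying the three factors, TruPreTar runs in time polynomial in $m$ and $n$, which is exactly computational efficiency.

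I expect the only delicate point to be the accounting for this nested criticality loop: although $G'$ shrinks during the scan (each matched pair removes one user and one task), one must verify that a single pass over its users—with an inner pass over the incident edges of each critical user—performs only $O(|E|)$ matching tests in total, rather than re-examining users repeatedly. Everything else is routine bookkeeping; and if a sharper bound were desired, each from-scratch matching computation could be replaced by a single incremental augmenting-path search, since $G'$ changes by at most one node or one edge between consecutive tests.
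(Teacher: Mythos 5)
Your proof is correct and follows essentially the same route as the paper's: bound the outer loop by $O(n+mn)$ iterations, observe that the right-perfect-matching test is polynomial (the paper uses the Hungarian algorithm, you use Hopcroft--Karp), and bound the number of such tests in the criticality-checking block per iteration. Your version is more careful about the accounting in the nested loop, but the argument is the same.
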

\begin{proof}
The subproblem of judging whether there is a right-perfect matching can be solved by Hungarian algorithm with complexity of $O((n+mn)^3)$. The loop complexity in the checking part (line \ref{checkstart}-\ref{checkend}) is at most $O(n^2m)$ The outermost complexity of our mechanism is $O(n+mn)$. So the overall complexity of our mechanism is polynomial.
\end{proof}

\begin{thm}
Our proposed mechanism TruPreTar is an incentive compatible, budget feasible, individually rational, and computational efficient mechanism.
\end{thm}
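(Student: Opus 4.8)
The plan is straightforward: this theorem is precisely the conjunction of the four lemmas already established for TruPreTar, so the proof consists of assembling them. I would first invoke the incentive-compatibility lemma, whose argument applied Myerson's Lemma to show that Algorithm~\ref{alg1} induces a monotone allocation rule (a user matched at bid $b_i$ remains matched at any $b_i' < b_i$, since the run of the algorithm up to the step that fixes her price does not depend on her own bid) and that the payment equals the critical threshold bid. I would then invoke the budget-feasibility lemma, which combines the explicit test $(|T'|+1)\cdot r_j \le B'$ performed whenever a task is added with the fact that the global price $P$ is non-increasing, yielding $\sum_{(i,j)\in M}p_i \le B - B'_{j'} + r_{j'}\cdot(|T_{j'}|+1) \le B$ where $j'$ is the last task inserted into $G'$.

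Next I would invoke the individual-rationality lemma: at every step the non-increasing price $P$ is sandwiched between the costs of the users currently kept in $G'$ and the values of the tasks currently in $G'$, so each matched pair $(i,j)$ satisfies $b_i \le p_i \le r_j$; here the initial deletion of every edge $(i,j)$ with $b_i > r_j$ is exactly what makes the platform-side inequality $p_i \le r_j$ attainable. Finally I would invoke the computational-efficiency lemma, whose accounting shows that each right-perfect-matching query is a Hungarian-algorithm call on a graph with $O(n+mn)$ nodes, that the checking loop (lines \ref{checkstart}--\ref{checkend}) costs at most $O(n^2 m)$ such calls, and that the outer iteration has length $O(n+mn)$, so the total running time is polynomial.

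Since each of the four component statements has already been proven in isolation, there is no substantive obstacle remaining; the theorem is essentially a bookkeeping corollary. The only point that merits a moment's attention is consistency of scope: all four lemmas must refer to the same object, namely the mechanism specified in Algorithm~\ref{alg1} under the model of the previous section, which they do. The theorem therefore follows immediately by combining the four lemmas.
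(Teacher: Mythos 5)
Your proposal is correct and matches the paper exactly: the theorem is stated as the conjunction of the four preceding lemmas, and the paper offers no separate argument beyond citing them, which is precisely your assembly. Your one-paragraph recapitulations of each lemma's proof are faithful to the paper's versions, so nothing further is needed.
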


\section{Theoretical Guarantee on Revenue}\label{sec:lowerbd}

To show the theoretical guarantee of TruPreTar on revenue, we first introduce the large market assumption.
\begin{assumption}[Large Market Assumption]
\label{assm1}
We assume $c_i\ll B$ and $r_j\ll B$ for each user $i$ and each task $j$.
\end{assumption}
Intuitively, it's assumed that each individual user or task is negligible compared with the budget. This assumption is widely adopted in previous work \cite{vaze2017online,anari2014mechanism} and it is practical in real world as the revenue of a single ride is indeed very small.

Next, we prove the theoretical guarantee under tight budget.
\begin{thm}\label{tight}
Under the large market assumption, we have $\sum_{(i,j)\in M}r_j \ge B$ if the budget is tight.
\end{thm}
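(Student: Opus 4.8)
I read ``the budget is tight'' as: during the run of Algorithm~\ref{alg1} some task $j'$ is rejected \emph{solely} because of the budget test, i.e.\ $G'\cup E_{j'}$ has a right-perfect matching but $(|T'|+1)\cdot r_{j'}>B'$ when $j'$ is processed. (If instead the run ends with the budget exactly exhausted, $R\ge\sum_{(i,j)\in M}p_i=B$ is immediate from budget feasibility; and if no task is ever rejected for budget reasons and the budget is not exhausted, we are in the ``budget sufficient'' regime.) The plan has two parts: (i) show that every task ever inserted into $G'$ is eventually matched---equivalently, $G'$ is empty when the algorithm halts---so that $R=\sum_{j\text{ inserted}}r_j$; and (ii) use the rejected task $j'$ to bound $R$ from below.

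For (i): after the edge-pruning step every surviving edge $(i,j)$ has $b_i\le r_j$, so in the sorted list user $i$ is processed no earlier than any task adjacent to it (ties favour tasks); hence a user can enter $G'$ only strictly before its own processing step, and it cannot re-enter afterwards since no later element is a task adjacent to it. When user $i$ is processed it is either non-critical for the right-perfect matching of $G'$, and is deleted, or critical, and then the check loop (lines~\ref{checkstart}--\ref{checkend}) matches it: a critical user can be matched along an edge that some right-perfect matching of $G'$ uses for it, and deleting a matched pair $(i,j)$ preserves the property that every right-perfect matching of the smaller graph extends via $(i,j)$ to one of $G'$, so a user that lay in every right-perfect matching of $G'$ still lies in every right-perfect matching of the smaller graph---hence a single sweep of the check loop matches all users that are critical at that moment. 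Therefore no user survives in $G'$ at termination, and since a right-perfect matching needs at least as many left nodes as right nodes, no task survives either; thus the matched tasks are exactly the tasks ever inserted into $G'$.

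For (ii): let $T_{j'}$ be the tasks present in $G'$ and $B'_{j'}$ the remaining budget when $j'$ is processed. Each task in $T_{j'}$ was inserted earlier, hence (decreasing-value order) has value $\ge r_{j'}$, and by (i) each is eventually matched, so these tasks contribute at least $|T_{j'}|\cdot r_{j'}$ to $R$. The budget already spent, $B-B'_{j'}$, is the total payment to the users matched so far, and by individual rationality for the platform each such payment is at most the value of its matched task; those matched tasks are disjoint from $T_{j'}$, so they contribute at least $B-B'_{j'}$ to $R$. Hence
\[
R\ \ge\ (B-B'_{j'})+|T_{j'}|\cdot r_{j'}.
\]
The failed budget test reads $B'_{j'}<(|T_{j'}|+1)\,r_{j'}$, i.e.\ $|T_{j'}|\cdot r_{j'}>B'_{j'}-r_{j'}$, whence $R>(B-B'_{j'})+(B'_{j'}-r_{j'})=B-r_{j'}$. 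Since $r_{j'}\ll B$ by the large-market assumption (Assumption~\ref{assm1}), this gives $R\ge B$.

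The arithmetic in (ii) is routine; the main obstacle is the structural claim in (i) that $G'$ halts empty. The delicate point is the interplay between the user-deletion step and the check loop: one must use the $b_i\le r_j$ ordering to pin down exactly when users can enter $G'$ and to rule out re-entry, and verify that a single sweep of the check loop matches \emph{every} currently-critical user, so that no critical user---and hence no unmatched task---is ever left stranded in $G'$. With that established, the lower bound on $R$ follows, the large-market assumption being needed only to absorb the single negligible term $r_{j'}$.
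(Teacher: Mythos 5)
Your proof is correct and follows essentially the same route as the paper's: identify the first task $j_1$ rejected by the budget test, lower-bound the revenue by $(B-B'_{j_1})+|T_{j_1}|\cdot r_{j_1}$ (platform individual rationality accounts for the spent budget, and the tasks still pending in $G'$ are each worth at least $r_{j_1}$ and will all be matched), then combine with the failed test $B'_{j_1}<(|T_{j_1}|+1)\,r_{j_1}$ and absorb the single term $r_{j_1}$ via the large-market assumption. The only differences are cosmetic: you make explicit the structural claim that every task inserted into $G'$ ends up matched (which the paper merely asserts, in the proof of its 2-approximation theorem), and you omit an intermediate chain of inequalities in the paper (involving $j_0$ and $A_T$) that is not actually needed for the final bound.
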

\begin{proof}
Let $j_1$ be the first task that is discarded because of budget constraint, $j0$ the last task added to $G'$ before $j1$, and $|T_{j0}|$ ($|T_{j1}|$) the number of tasks in $G'$ before considering $j0$ ($j1$). Assume that the set of tasks allocated between considering $j0$ and $j1$ is $A_T$, and the total payment for them is $P_A$. We have $r_{j0}\cdot(|T_{j0}|+1)\le B'_{j0}$ and $r_{j1}\cdot(|T_{j1}|+1) > B'_{j1}.$ Since the payments to users in $A_T$ are all between $r_{j0}$ and $r_{j1}$, we can get that
\begin{align*}
|T_{j1}|\cdot r_{j1} &\le |T_{j1}|\cdot r_{j0}\\
&= (|T_{j0}|+1)\cdot r_{j0}-|A_T|\cdot r_{j0} \\
&\le (|T_{j0}|+1)\cdot r_{j0}-P_A\\
&\le B'_{j0}-P_A\\
&=B'_{j1}
\end{align*}
Combining the above inequations, we have $$B'_{j1}-r_{j1}\cdot |T_{j1}|<r_{j1},$$ and further we obtain
\begin{align*}
\sum_{(i,j)\in M}r_j &\ge B-B'_{j1}+r_{j1}\cdot |T_{j1}|\\
&\ge B-r_{j1}\\
&\simeq B
\end{align*}
where the first inequation is because of the individual rationality of both platform and users and the last approximate equation is due to the large market assumption.
\end{proof}

Before we prove the theoretical guarantee under sufficient budget, we first demonstrate that a greedy algorithm as shown in Algorithm~\ref{alg2} has an approximation ratio of 2, i.e., it can achieve at least half of the optimal revenue under sufficient budget.

\begin{algorithm}[h]
\caption{A Greedy Mechanism}
\label{alg2}
\For {each task $j$ in decreasing order of $r_j$}{
\For {each edge $(i,j)$ of task $j$}{
\If {user $i$ is not matched}{
Match $i$ with $j$.

Skip to next task.
}}}
\end{algorithm}

\begin{lem}
\label{2app}
Algorithm~\ref{alg2} is a 2-approximation algorithm if the budget is sufficient.
\end{lem}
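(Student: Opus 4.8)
The plan is to argue that Algorithm~\ref{alg2} is really just the classical greedy algorithm for maximum-weight bipartite matching where every edge $(i,j)$ is assigned the weight $r_j$ of its task endpoint, and then invoke the standard fact that this greedy procedure is a $2$-approximation for maximum-weight matching. First I would set up the comparison: let $M$ be the matching produced by Algorithm~\ref{alg2} and let $M^{*}$ be the revenue-optimal matching (which, since the budget is assumed sufficient, is just an unconstrained maximum-weight matching with edge weights $w(i,j)=r_j$). The key structural observation is that the greedy algorithm processes tasks in non-increasing order of $r_j$ and matches each task to an arbitrary still-unmatched neighbour if one exists; equivalently it processes edges in non-increasing order of their weight $r_j$ (ties among edges of the same task are irrelevant since they have equal weight) and adds an edge whenever both endpoints are free.

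\medskip

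Next I would carry out the standard charging argument. For each edge $e^{*}=(i^{*},j^{*})\in M^{*}$ that is \emph{not} itself in $M$, the reason the greedy algorithm failed to add $e^{*}$ is that by the time it considered $e^{*}$, at least one of its endpoints $i^{*}$ or $j^{*}$ was already matched by some edge $e\in M$ that was considered earlier, hence with $w(e)\ge w(e^{*})$. Charge $e^{*}$ to such an edge $e\in M$. Each edge $e\in M$ receives at most two charges this way — at most one through its user endpoint and at most one through its task endpoint — and each charge is of value at most $w(e)$. Summing over all of $M^{*}$ (edges of $M^{*}\cap M$ are charged to themselves, contributing $w(e)$ to a single edge), we get $\sum_{e^{*}\in M^{*}} w(e^{*}) \le 2\sum_{e\in M} w(e)$, i.e. the optimal revenue is at most twice the greedy revenue, which is exactly the claimed $2$-approximation. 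One point I would make explicit is that adding task $j^{*}$ is blocked precisely by a heavier-or-equal edge already in $M$, so the weight of every charged edge dominates the weight of the edge it absorbs — this is where the decreasing order of $r_j$ is used.

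\medskip

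The main obstacle, and the only subtlety worth dwelling on, is making the charging map well-defined and verifying the ``at most two charges per edge'' bound rigorously, including the boundary case where $e^{*}\in M^{*}\cap M$ (charge it to itself, and note it then cannot simultaneously absorb another charge through the same endpoint). A secondary point to be careful about is the tie-breaking: because all edges incident to a common task carry the same weight $r_j$, the particular unmatched neighbour chosen by the inner loop does not affect the weight bound, so the argument is insensitive to how ties are resolved. Finally I would remark that the ``sufficient budget'' hypothesis is used only to ensure $M^{*}$ is an unconstrained maximum-weight matching, so that comparing against the greedy matching is legitimate; with it in hand the inequality $R(M^{*})\le 2R(M)$ gives the lemma.
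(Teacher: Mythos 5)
Your charging argument is correct: viewing Algorithm~\ref{alg2} as greedy edge selection with weight $w(i,j)=r_j$, every optimal edge not taken is blocked at one endpoint by an already-chosen edge of at least equal weight, and each greedy edge absorbs at most two such charges (one per endpoint), giving $R(M^{*})\le 2R(M)$. The paper defers its own proof of this lemma to the full version, but what you give is the standard argument for this fact and there is nothing to object to.
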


The proof is provided in our full paper due to space limitation.

Next, we prove the following lemma by showing that the allocation of our mechanism coincides with a particular run of the greedy algorithm.
\begin{thm}\label{sufficient}
The mechanism TruPreTar is a 2-approximation mechanism if the budget is sufficient.
\end{thm}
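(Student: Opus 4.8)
The plan is to show that the matching produced by TruPreTar is exactly the matching produced by some valid execution of the greedy Algorithm~\ref{alg2} (with ties broken appropriately), and then invoke Lemma~\ref{2app} to conclude the 2-approximation. Since the budget is sufficient, the budget-feasibility check $(|T'|+1)\cdot r_j \le B'$ in TruPreTar never fails, so the only reasons a task can fail to be included in $G'$ are structural: either the task has no incident edges, or adding it (together with its edges to currently unmatched users) would destroy the right-perfect matching property of $G'$. I would argue that this structural condition is precisely the condition under which greedy ``skips'' a task, i.e., fails to find an unmatched neighbor.

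The key steps, in order: First I would fix the tie-breaking order used by TruPreTar on $F = N \cup T$ and extract from it the induced order on tasks; I would run Algorithm~\ref{alg2} with tasks processed in that same decreasing-$r_j$ order, and within each task, neighbors scanned in a compatible order. Second, I would set up an invariant maintained through the iteration over the sorted list $F$: after all elements with value $\ge \theta$ have been processed, the set of matched user--task pairs in $M$ together with the tasks currently sitting in $G'$ (still awaiting a critical-user match) corresponds exactly to the state of the greedy algorithm after it has processed all tasks with $r_j \ge \theta$. More concretely, a task ends up matched in TruPreTar if and only if, at the moment it is considered, $G' \cup E_j$ admits a right-perfect matching; by Hall's theorem this is equivalent to saying there is an augmenting structure giving $j$ access to an as-yet-unmatched user, which is exactly the greedy success condition. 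Third, I would handle the bookkeeping difference that TruPreTar delays the actual assignment of a task to a concrete user until that user becomes ``critical'' (via the check in lines~\ref{checkstart}--\ref{checkend}), whereas greedy assigns immediately; I would argue this only changes \emph{which} user each task is matched to, not \emph{which tasks} get matched, and since the revenue objective $R = \sum_{(i,j)\in M} r_j$ depends only on the set of matched tasks, the two mechanisms achieve identical revenue. Finally, applying Lemma~\ref{2app} to the greedy execution yields the 2-approximation for TruPreTar.

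The main obstacle I anticipate is making the ``coincides with a particular run of greedy'' claim airtight at the level of the right-perfect matching condition. The subtlety is that TruPreTar's decision to admit task $j$ into $G'$ is made \emph{before} committing $j$ to a specific user, and $G'$ may contain several tasks whose user-assignments are still fluid; one must show that ``$G' \cup E_j$ has a right-perfect matching'' is the correct global analogue of greedy's local ``$j$ has a free neighbor'' test, and that the critical-user resolution step never strands a task that was admitted. I would discharge this with an inductive argument using Hall's condition / augmenting paths: maintaining a right-perfect matching of $G'$ is equivalent to maintaining a system of distinct representatives for the admitted tasks, and admitting $j$ succeeds iff the neighborhood of the admitted-task set still satisfies Hall's condition, which is precisely when greedy (scanning $j$'s neighbors for an unmatched one, allowing augmentation) would succeed. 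A secondary, more routine concern is confirming that the user-removal steps (deleting non-critical high-cost users from $G'$) do not affect which tasks can subsequently be admitted in a way that diverges from greedy --- but since removed users are by definition non-critical, their deletion preserves the right-perfect matching and hence the Hall condition, so the task-admission decisions are unchanged.
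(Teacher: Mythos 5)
Your proposal matches the paper's proof in its essential structure: both arguments establish that TruPreTar's allocation coincides with a particular run of the greedy Algorithm~\ref{alg2} (with tie-breaking chosen to follow TruPreTar's final matching $M$) and then invoke Lemma~\ref{2app}. The paper's execution is leaner---it only needs a short contradiction showing that any task TruPreTar skips has no free neighbor when greedy reaches it, since an unmatched neighbor $i$ of a skipped task $j$ would make $M\cup(i,j)$ a right-perfect matching of $G'\cup E_j$---whereas your invariant/Hall's-theorem machinery is heavier than necessary (and ``augmenting structure'' slightly overstates greedy's test, which only scans for a directly adjacent free user), but the direction of the equivalence you actually need is sound.
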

\begin{proof}
Since the budget is sufficient, we know that no task is discarded due to the budget limitation. We denote the matching in our mechanism as $M$. It's assumed that this matching is produced as following: For each task $j$ in decreasing order of $r_j$, if $j\in M$, we allocate task $j$ to its matched user in $M$, otherwise the task is skipped. Now we prove that, in the greedy algorithm, this process can also happen.

It's obvious that for task $j \in M$, we can assign task $j$ to its matched user in $M$ in the greedy algorithm, thus we only need to prove that for each task $j \not\in M$, when we consider it in the greedy algorithm, there is no unassigned user that has an edge to $j$.

Next, for contradiction, assume that we can find such an unmatched user $i$ that has an edge to $j \not\in M$ when processing $j$ in the greedy algorithm. Then in our mechanism, when considering $j$, there will be a right-perfect matching for graph $G'\cup \{j\}$, i.e., $M\cup (i,j)$. Thus, the task will be added to $G'$. Note that in our mechanism, any task added to $G'$ will end up being matched. This contradicts with our assumption and hence the lemma is proved.
\end{proof}

To understand how ``good'' the approximation ratio of 2 is, next we prove that the lower bound is at least $\sqrt{2}$.
\begin{thm}\label{thm:lowerbd}
There is no mechanism that satisfies incentive compatibility and individual rationality can achieve better than $\sqrt{2}$-approximation when the budget is sufficient.
\end{thm}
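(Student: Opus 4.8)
The plan is an adversarial argument: suppose some mechanism $\mathcal{M}$ is incentive compatible and individually rational (for both users and platform) and achieves an approximation ratio $\alpha<\sqrt{2}$ whenever the budget is sufficient; I will exhibit a single tiny instance (two users, three tasks) on which $\mathcal{M}$ must lose a factor arbitrarily close to $\sqrt{2}$, contradicting the assumed guarantee. The instance uses users $\{a,b\}$, tasks $\{1,2,3\}$, the ``path'' of feasible edges $E=\{(a,1),(a,2),(b,2),(b,3)\}$, values $r_1=\rho$, $r_2=1$, $r_3=r$ with $0<\rho<1$ and $r$ slightly below $1$ (the precise choice will fall out of a small optimization at the end), and a budget $B$ made large enough that budget feasibility never binds. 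I fix $c_a=\epsilon$ negligible and treat $c_b$ as a free parameter.

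The first step is to pin down $\mathcal{M}$'s behavior on the $b$-coordinate. Holding $a$'s bid at $\epsilon$ and applying Myerson's Lemma (Lemma~\ref{lemma1}) to $\mathcal{M}$, the allocation of $b$ is monotone in $c_b$, so there is a threshold $\tau_b$ with $b$ matched exactly when $c_b\le\tau_b$ (up to the boundary), and the payment to $b$ whenever she is matched equals $\tau_b$. Combining the threshold-payment rule with platform individual rationality ($p_i\le r_j$) gives the crucial consequence: whenever $b$ is matched, she is matched to a task whose value is at least $\tau_b$. This is exactly the obstruction that makes the revenue-optimal matching non-implementable, as already foreshadowed by the counterexample in Figure~\ref{fig:a}(c).

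The argument then splits on how $\tau_b$ compares with $r_3=r$. If $\tau_b>r$, then on the instance $c_b=\epsilon$ the user $b$ is matched but only to task $2$ (task $3$ has value $r<\tau_b$; also $\tau_b\le 1$, else no reachable task could satisfy the payment bound for bids near $\tau_b$), so $a$ can only receive task $1$ and $\mathcal{M}$'s revenue is at most $\rho+1$, whereas the matching $\{(a,2),(b,3)\}$ has revenue $1+r$; hence the ratio is at least $(1+r)/(1+\rho)$. If instead $\tau_b\le r$, then on the instance with $c_b$ chosen in $(r,1]$ (feasible since $r<1$) the user $b$ is not matched at all, so $\mathcal{M}$ gets revenue at most $1$, while $\{(a,1),(b,2)\}$ has revenue $\rho+1$; hence the ratio is at least $1+\rho$. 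Choosing $\rho$ and $r$ to balance the two bounds, namely $(1+\rho)^2=1+r$ with $r\to 1$, makes both bounds tend to $\sqrt{2}$; therefore for any fixed $\alpha<\sqrt{2}$ a suitable choice of $(\rho,r)$ forces a ratio exceeding $\alpha$, the desired contradiction.

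The main obstacle, and the conceptual heart of the proof, is the second step: turning ``incentive compatibility plus platform IR'' into ``a high threshold forces a high-value matched task,'' and doing so with enough care to cover the boundary case $\tau_b=r_3$ (handled by putting the exploited instance strictly inside $(r_3,r_2]$) and to rule out $\tau_b>r_2$. The remaining work — verifying that the matchings I quote are genuinely revenue-optimal on each instance (a finite check over the few matchings in a two-by-three graph) and solving the one-variable optimization that extracts the constant $\sqrt{2}$ — is routine.
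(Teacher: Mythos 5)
Your proof is correct, and in the limit it uses exactly the paper's hard instance: the four-edge path $(a,1),(a,2),(b,2),(b,3)$ with value ratios $(\sqrt{2}-1):1:1$ (the paper takes $r_1=1+\epsilon$, $r_2=\sqrt{2}+1+\epsilon$, $r_3=\sqrt{2}+1$, which is the same after rescaling by $r_2$), together with the same tension: an IC+IR mechanism cannot both keep $b$'s payment low enough to route her to the cheap task $3$ and keep her matched when her cost is near $r_3$. The only real difference is how the IC constraint is deployed. The paper compares two explicit instances ($c_b=\sqrt{2}+1$ versus $c_b=\sqrt{2}+1+\epsilon/2$) and argues that matching $b$ to task $3$ in the first would give her a profitable deviation to the second; you instead invoke Myerson's characterization once, compress the mechanism's behaviour into a single threshold $\tau_b$, and case-split on whether $\tau_b$ exceeds $r_3$. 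The two routes are logically equivalent, but yours makes the exhaustiveness of the case analysis explicit and supplies a cleaner justification for the step the paper states tersely (that in its ``case 2'' the mechanism is forced to match both tasks $1$ and $2$ and pay $b$ at least her reported cost), at the price of leaning on the full threshold-payment form of Myerson's lemma where the paper only needs one concrete deviation. The balancing $(1+\rho)^2=1+r$ with $r\to 1$ recovers the constant $\sqrt{2}$ exactly as in the paper.
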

\begin{proof}
We prove the lemma with a concrete counter example. Assume there exists a mechanism $\mathscr{F}$ that can achieve an approximation ratio better than $\sqrt{2}$. Consider case 1 where there are two users $\{a, b\}$, three tasks $\{1, 2, 3 \}$ and 4 edges $\{(a,1),(a,2),(b,2),(b,3)\}$. In addition, we have that $c_a=\epsilon, c_b=\sqrt{2}+1, r_1=1+\epsilon, r_2=\sqrt{2}+1+\epsilon, r_3=\sqrt{2}+1$, where $\epsilon$ is a small positive number. We can observe that the optimal matching should be $\{(a,2),(b,3)\}$ which achieves revenue of $2\sqrt{2}+2+\epsilon$, we now prove that any mechanism that satisfies the above properties can achieve total revenue of at most $2+\sqrt{2}+2\epsilon$.

First we consider case 2 where the only difference with case 1 is that $c_b=\sqrt{2}+1+\frac{\epsilon}{2}$ and hence the edge $(b,3)$ has to be deleted. In case 2, mechanism $\mathscr{F}$ can only output the matching $\{(a,1),(b,2)\}$, otherwise, the approximation can be at least$\frac{r_1+r_2}{r_2}>\sqrt{2}$. Moreover, due to individual rationality, $p_b\ge \sqrt{2}+1+\frac{\epsilon}{2}$.

Then we consider case 1, in the output matching of $\mathscr{F}$, if user $b$ is matched with task $3$, the payment is at most $1+\sqrt{2}$ and the utility of user $b$ is at most 0. If user $b$ misreports cost of $\sqrt{2}+1+\frac{\epsilon}{2}$, it becomes case 2, and as stated above, the utility of user $b$ can be at least $\frac{\epsilon}{2}$. Thus, for incentive compatibility, mechanism $\mathscr{F}$ has to allocate task $2$ to user $b$ and pay at least $\sqrt{2}+1+\frac{\epsilon}{2}$. Hence the output matching of $\mathscr{F}$ can achieve revenue of at most $2+\sqrt{2}+2\epsilon$, and the approximation ratio limit is $\sqrt{2}$ when $\epsilon\rightarrow 0$.
\end{proof}

{\bf {Practical Guideline:}} We provide a guideline as to how the above theoretical results can be applied to practice. The platform can initially set a sufficient budget and gain a revenue $R_{suf}$. Based on Theorem \ref{sufficient}, we know that the optimal revenue is at most $2\cdot R_{suf}$. Hence after that, the platform can set a tighter budget $B=\beta\cdot R_{suf}$ where $0<\beta\le 2$, and Theorem \ref{tight} guarantees that TruPreTar will achieve at least $\beta/2$ of the optimal solution's revenue. This way, the platform can control the budget while ensuring a minimum revenue.

\section{Evaluation}

We conduct simulation using a real-world dataset obtained from a large bike-sharing company in China called Mobike. The bike riding data cover $8\times 8$ regions of Beijing with each region being 0.6km $\times$ 0.6km, and are dated from May 10th to 14th, 2017. With the same distribution of destinations in this dataset, we build a simulator which can randomly generate users' destinations. In the experiments, we set the number of users $n=200$, and test different location numbers $m$. The cost of each user $c_i$ is drawn from uniform distribution over $[0,\overline{c}]$ where $\overline{c}=5$. The value of a task is calculated as the difference between the Kullback-Leibler (KL) divergences \cite{kullback1951information} before and after fulfilling the task, similarly to previous work \cite{pan2019deep,lv2019hardness}. Because of the space limitation, we refer the authors to \cite{lv2019hardness} for concrete calculation of the task value. The acceptable range $h$ is set as 300m and 600m, respectively. We also test the budget of 50 and 500 where 500 is sufficient while 50 is not. In addition, we conduct each experiment 10 times and take the average.

We compare TruPreTar with the following mechanisms:

\begin{itemize}
\item {\bf APP-OPT:} As the budgeted matching problem is an NP-hard problem (it can be reduced to the knapsack problem), we cannot give an optimal allocation as benchmark. However, if the budget is tight and the large market assumption holds simultaneously, the following strategy is approximately an optimal mechanism: consider edges in decreasing order of $r_j/b_i$, match user $i$ and task $j$ if they are not ever matched before, and pay user $i$ exactly her bid. The process stops until the budget is exhausted or there are no more edges.
    This mechanism achieves the maximum revenue under the above two assumptions but it's not truthful.
\item {\bf Greedy:} The greedy mechanism considers users in increasing order of their bids and allocates the task with the highest value as long as it is higher than the bid of the user. Once a user is matched, the price for all winning users is updated as the bid of her next user (whose bid is higher). The mechanism stops once a user cannot find a feasible task or the overall payment is above the budget, and pays all the winning users the bid of this unmatched user. If all users can be matched, we don't match the last one, and set her bid as the price for all other users. This greedy mechanism is incentive compatible but cannot provide any theoretical guarantee.
\item {\bf Surge:} Surge pricing is widely used in practice and it's an effective way to promote the revenue of platform \cite{guda2019your}. We adopt a simple version of surge pricing here. Consider users in increasing order of their bids and allocate the task with the highest value to them once $\alpha r_j$ is higher than the bid of the user, where $\alpha\in(0,1)$ is a surge factor and $r_j$ is the value of the task. The user is paid $\alpha r_j$. The algorithm also terminates until the budget is exhausted or there are no feasible users. In the experiments, we set $\alpha=0.8$. It can be easily shown that the Surge mechanism is not truthful.
\end{itemize}
\begin{figure}[!tb]
\subcaptionbox{Revenue for $B=50$}{\includegraphics[width=0.49\columnwidth]{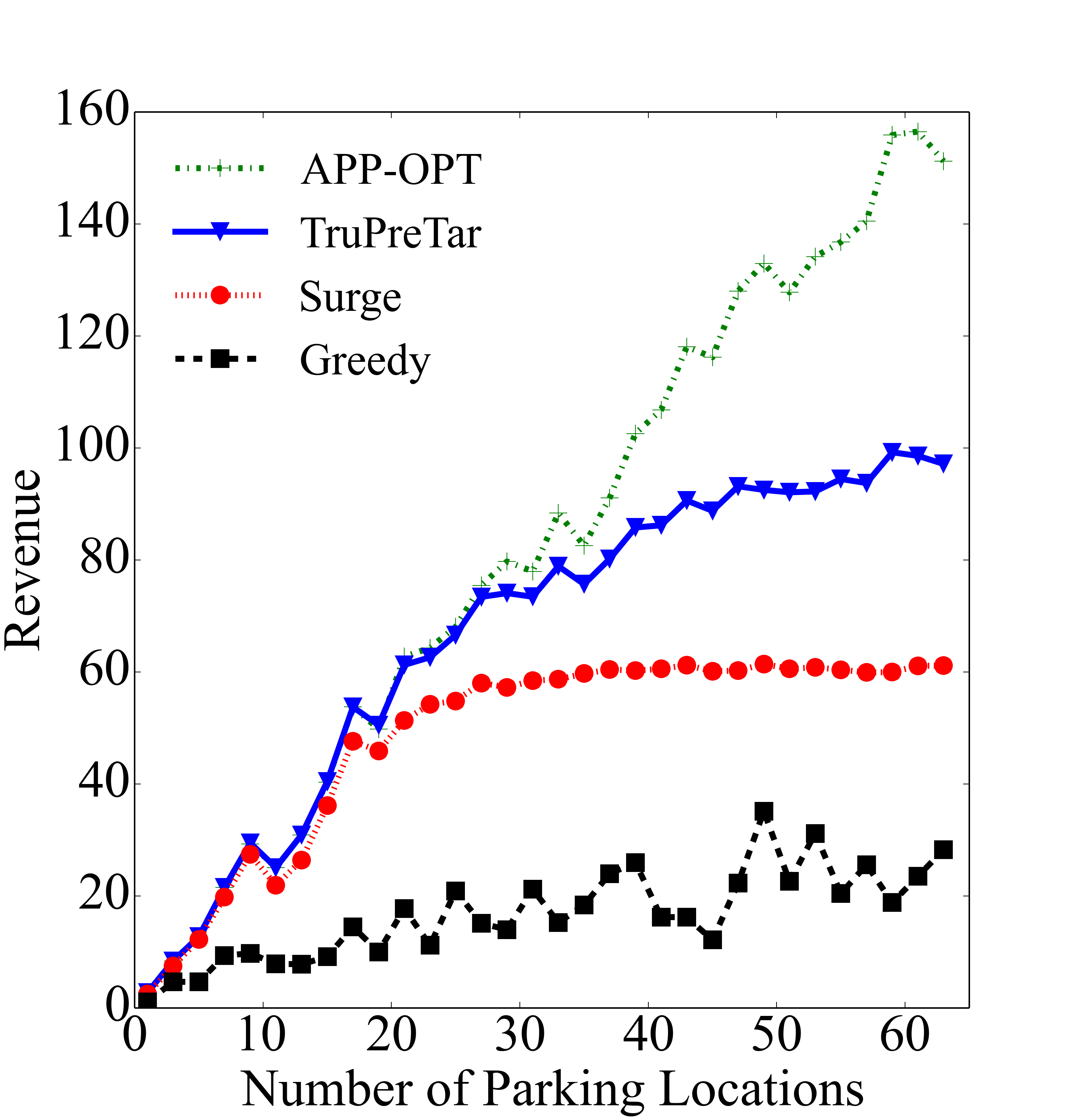}}
\hfill
\subcaptionbox{Profit for $B=50$}{\includegraphics[width=0.49\columnwidth]{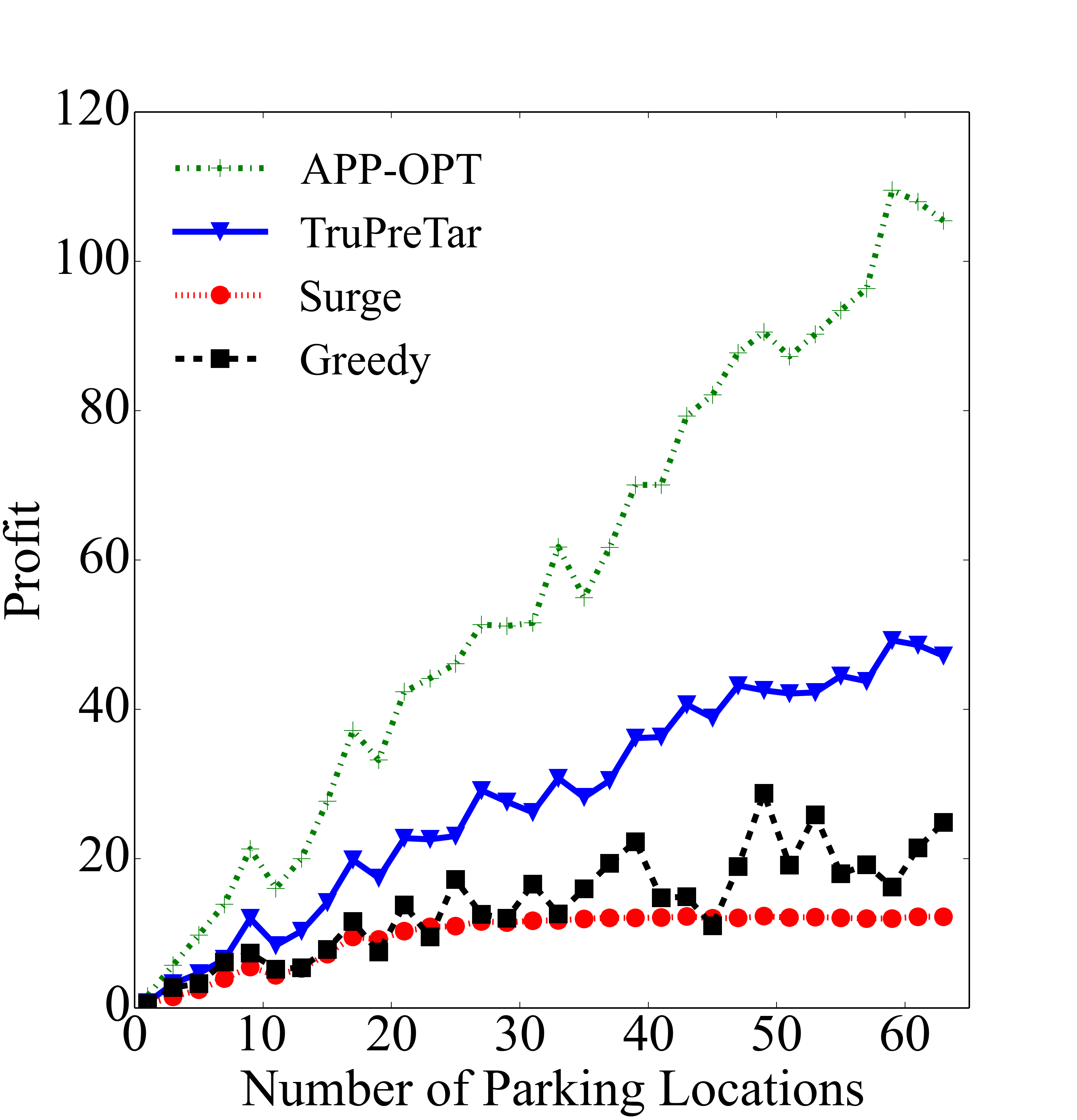}}
\vfill
\subcaptionbox{Revenue for $B=500$}{\includegraphics[width=0.49\columnwidth]{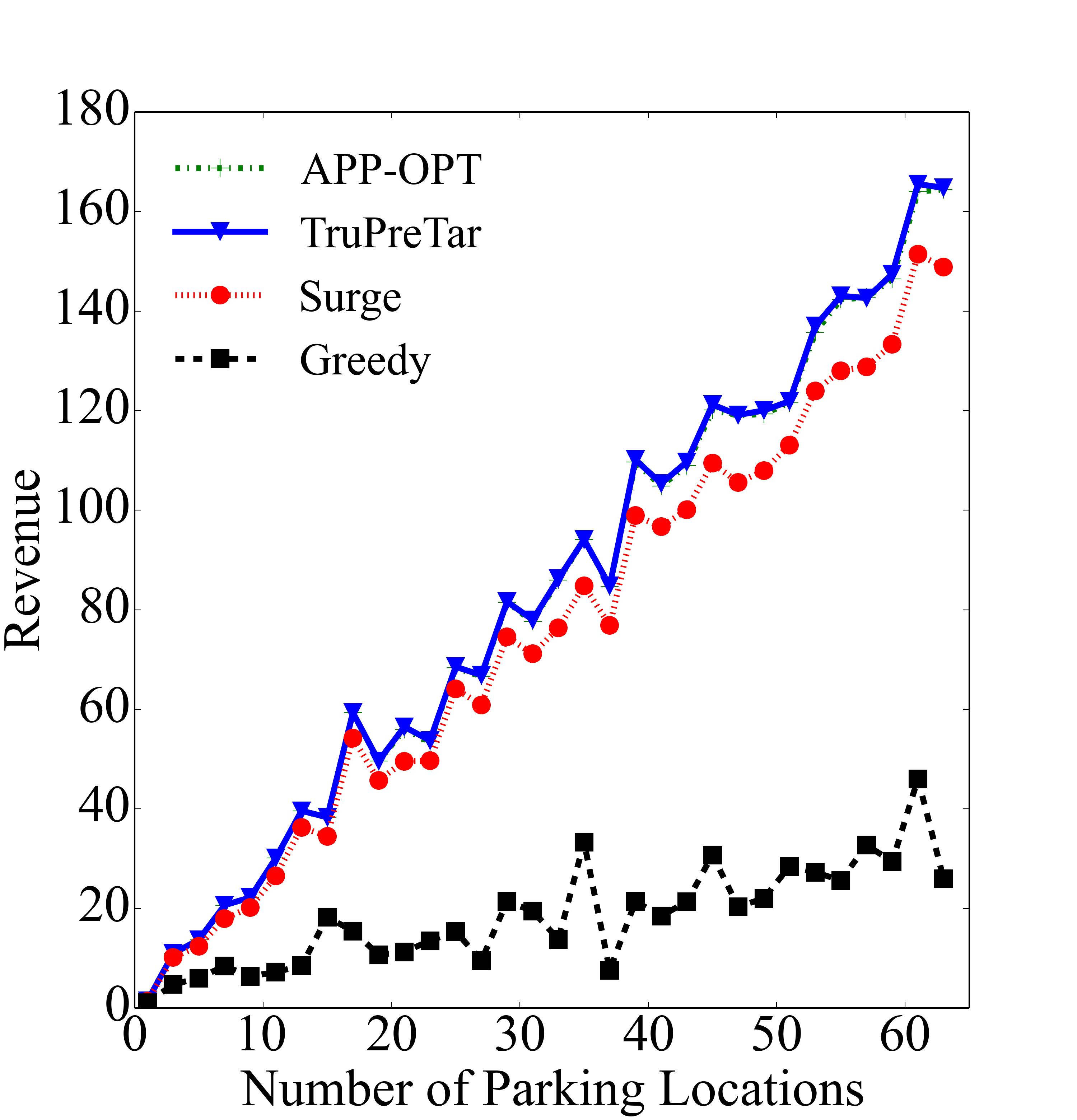}}
\hfill
\subcaptionbox{Profit for $B=500$}{\includegraphics[width=0.49\columnwidth]{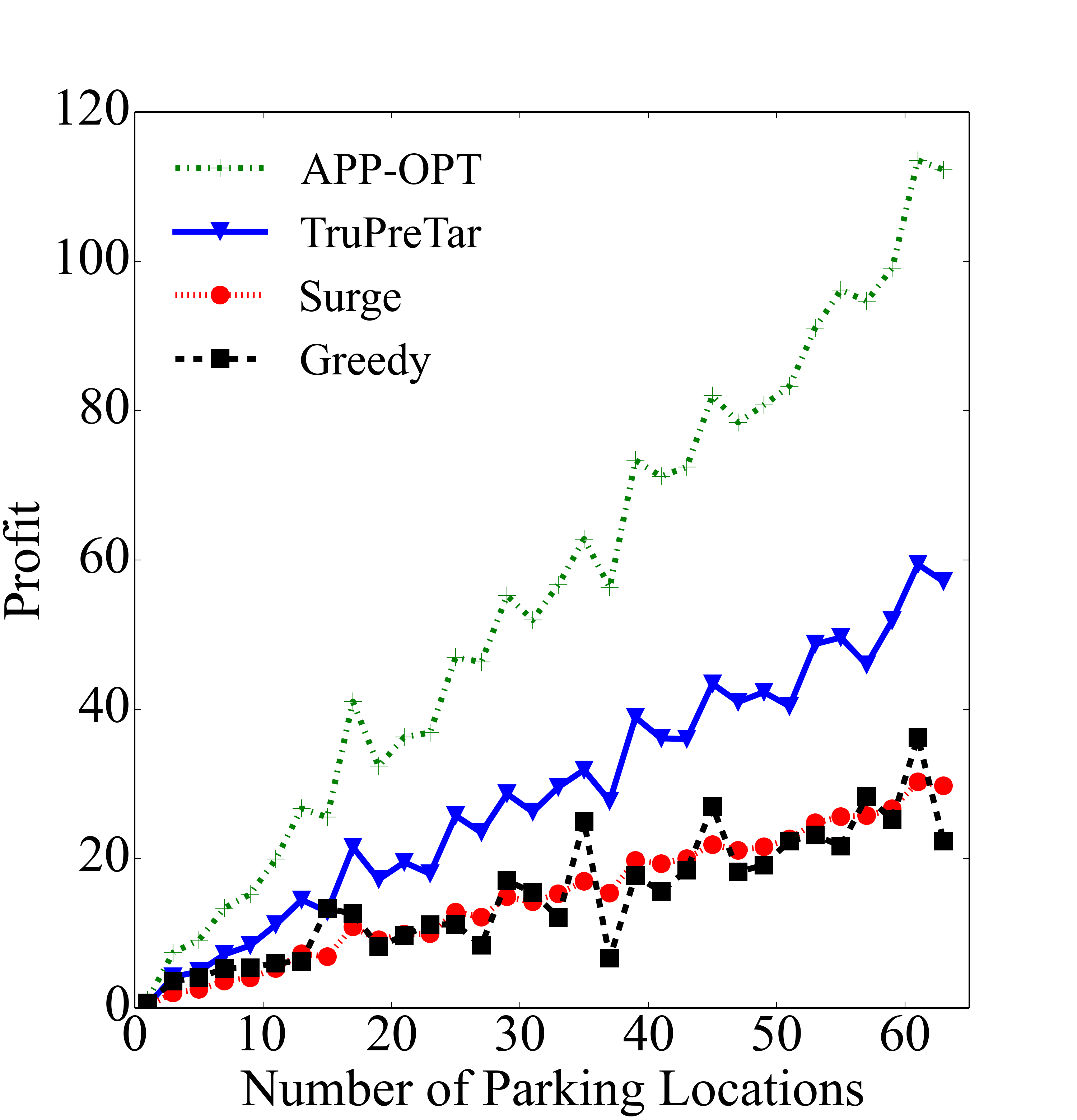}}
\caption{Revenue and profit comparison when $h=300m$.}
\label{fig:300}
\end{figure}

\begin{figure}[!tb]
\subcaptionbox{Revenue for $B=50$}{\includegraphics[width=0.49\columnwidth]{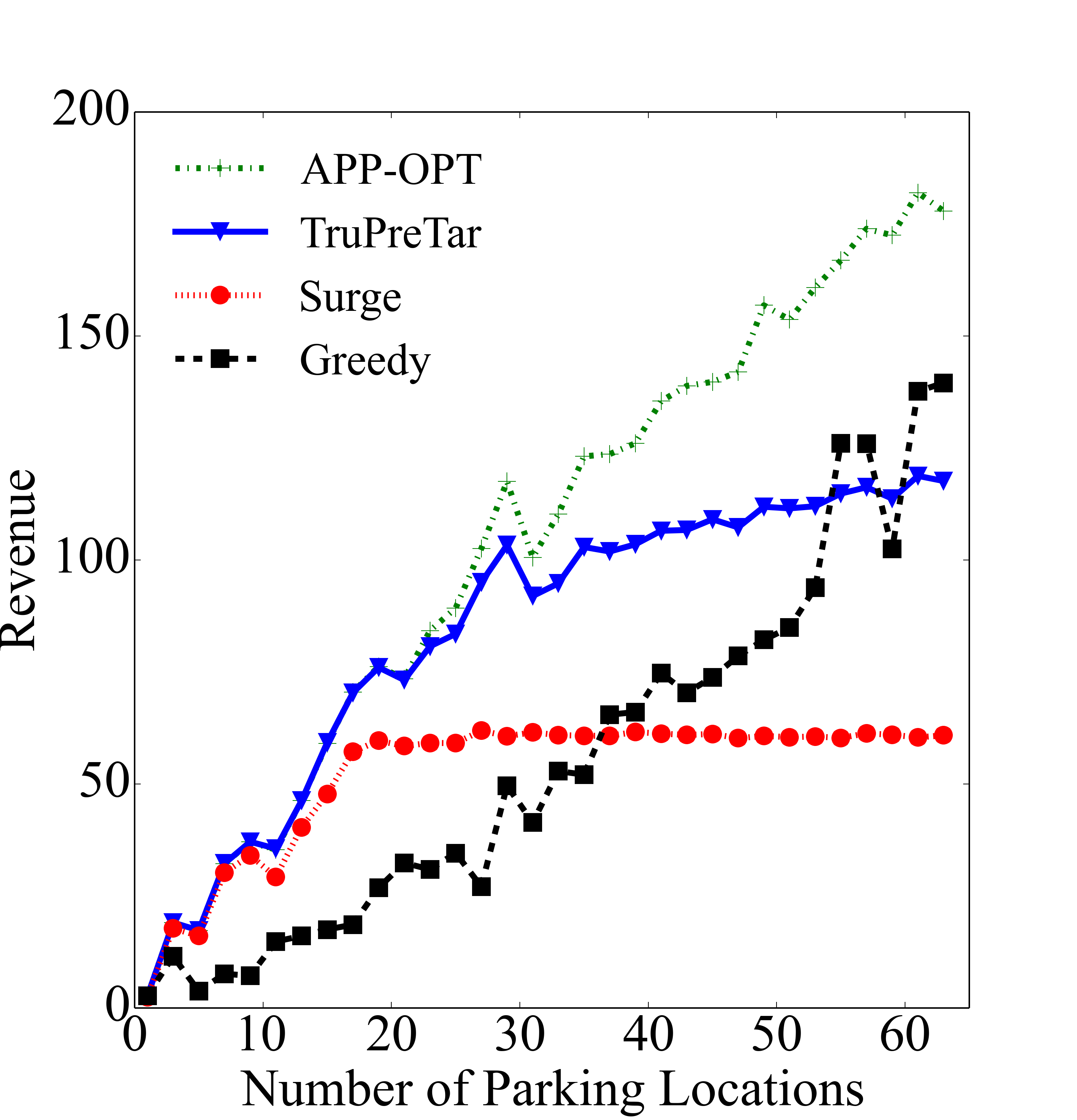}}
\hfill
\subcaptionbox{Profit for $B=50$}{\includegraphics[width=0.49\columnwidth]{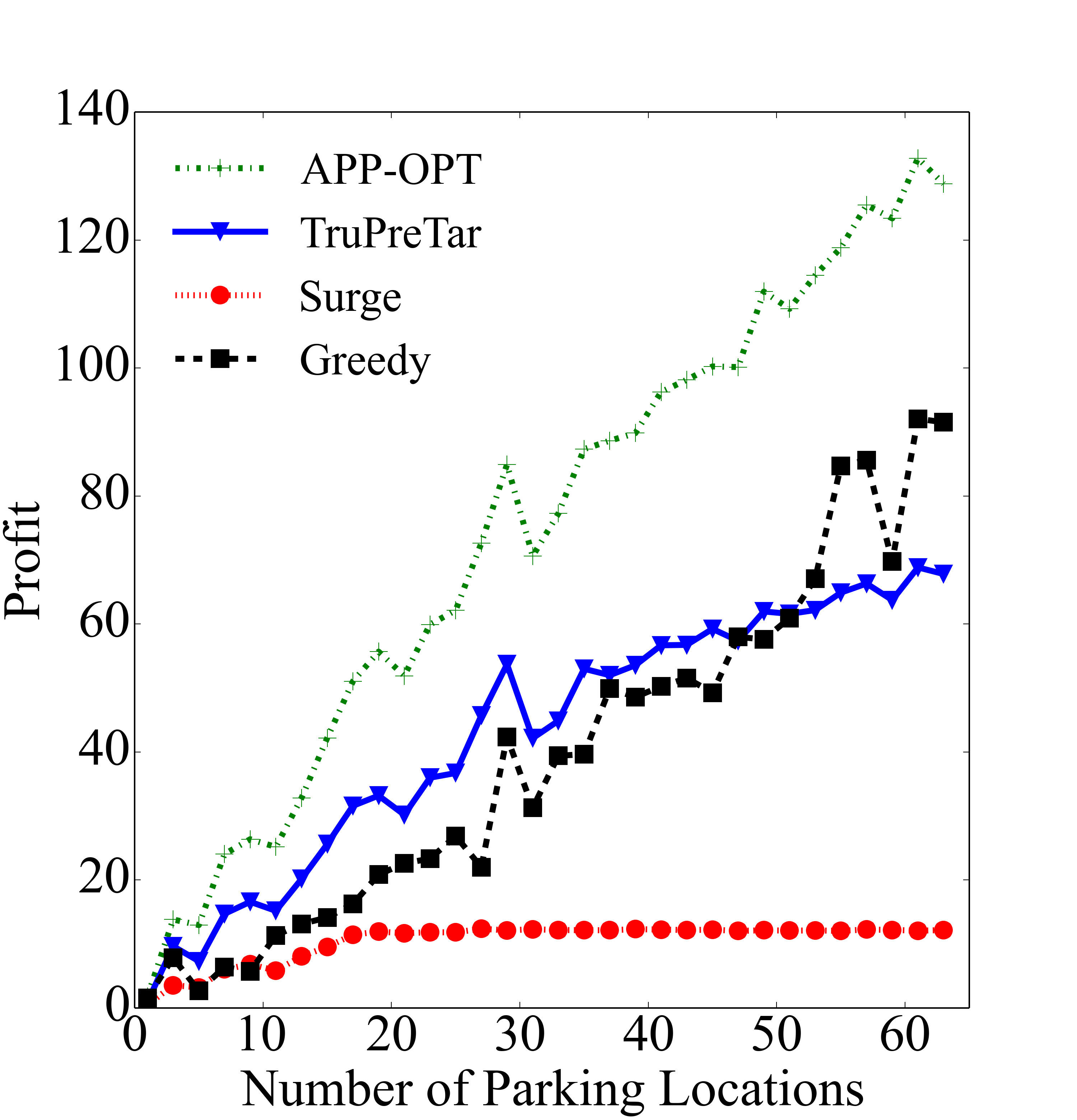}}
\vfill
\subcaptionbox{Revenue for $B=500$}{\includegraphics[width=0.49\columnwidth]{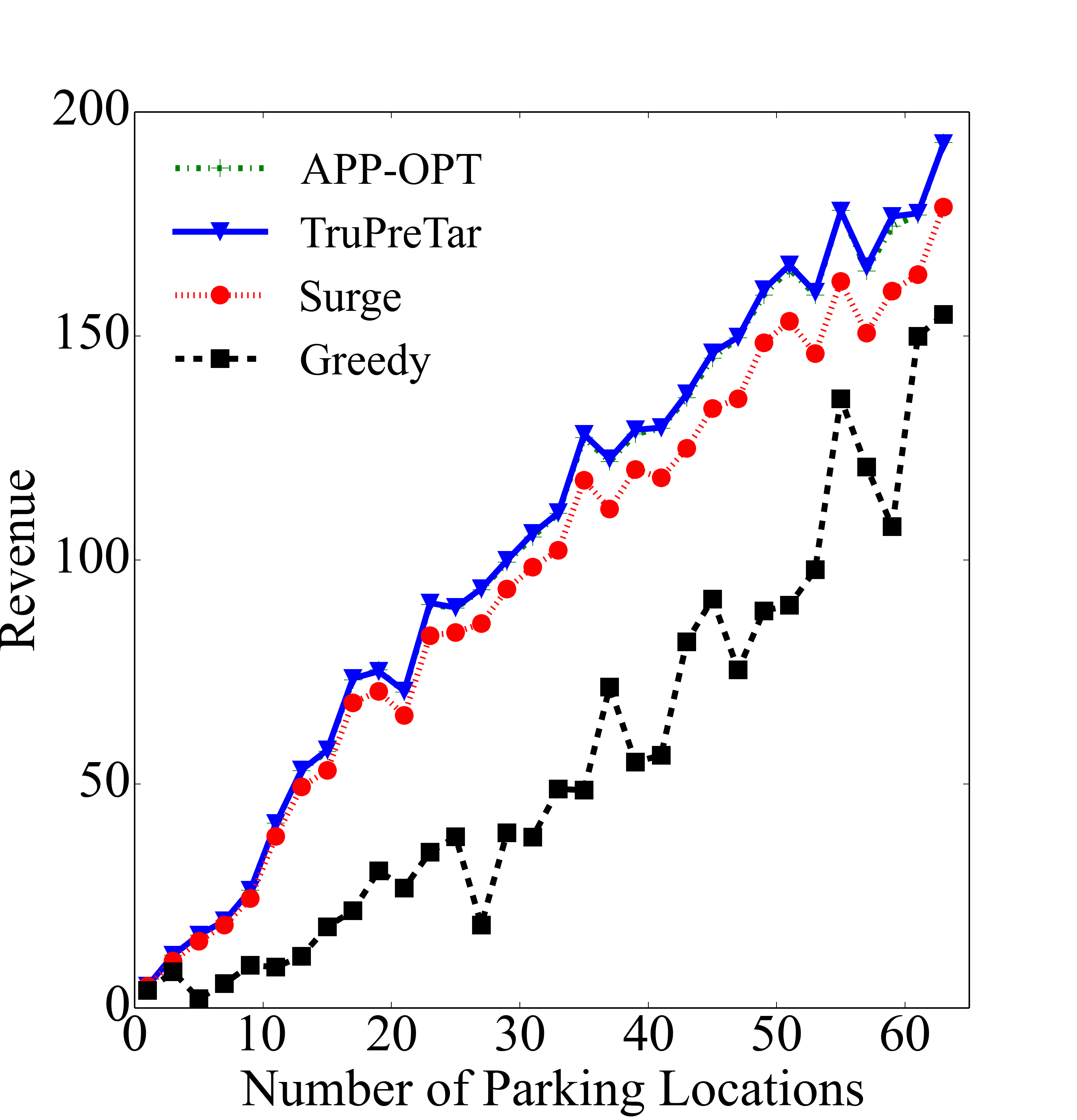}}
\hfill
\subcaptionbox{Profit for $B=500$}{\includegraphics[width=0.49\columnwidth]{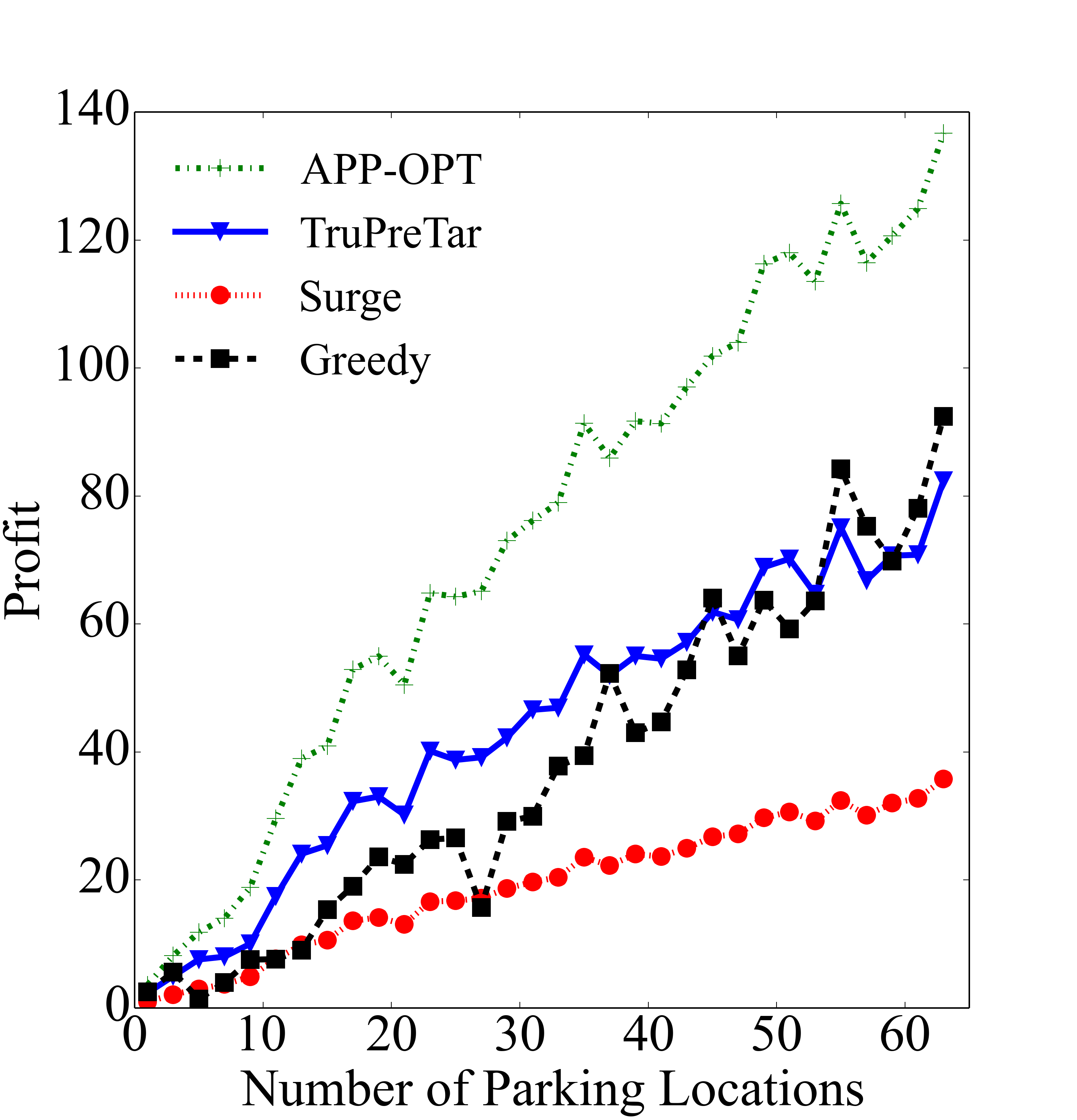}}
\caption{Revenue and profit comparison when $h=600m$.}
\label{fig:600}
\end{figure}

\subsection{Results}

The experiment results for different acceptable extra distance $h=300m$ and $h=600m$ are given in Figure \ref{fig:300} and Figure \ref{fig:600}, respectively.
In Figure \ref{fig:300}, we show the revenue and profit of platform under different parking location number $m$ and different budget $B$.
In general, as $m$ increases, revenue and profit both increase, because of more right nodes in the bipartite graph. When the budget is tight ($B=50$), we make the following specific observations. The revenue and profit of our mechanism are constantly higher than Surge and Greedy. When $m$ is less than 30, as the bipartite graph is small, the budget is enough and our mechanism achieves nearly the same revenue as APP-OPT. When $m$ is larger than 30, the budget will be exhausted and the revenue (and hence profit) of Surge mechanism stops increasing because it always has a constant ratio between payment and revenue. However, TruPreTar can output a better matching when the bipartite graph is larger and it shows much better performance over Surge and Greedy on both revenue and profit.
When the budget is sufficient, our mechanism is not as good as APP-OPT but outperforms the others. As APP-OPT pays users exactly their costs, the mechanism is budget-saving but not practical in application because users will bid higher costs.

In Figure \ref{fig:600}, $h$ is larger so more edges appear in the bipartite graph. In this case, the performance of APP-OPT, TruPreTar and Surge are similar to Figure \ref{fig:300}, but Greedy shows both much better revenue and profit. This is due to the fragility of Greedy mechanism as a single user can determine the termination of the mechanism. Thus, when $B=300m$, the mechanism stops quickly because of the lack of edges, and when $B=600m$, the termination is delayed. To show this, we test another 100 rounds of $m=60$ for both Greedy and TruPreTar, and the maximum revenue of Greedy is 161.3, the minimum revenue is 19.7, and the variance is 648.1, while the data for TruPreTar is 127.2, 98.8 and 32.2 respectively. The profit shows similar results. Therefore, although Greedy can sometimes be slightly better than our mechanism, it is fragile and has much larger fluctuation.

In summary, our mechanism TruPreTar performs well on both revenue and profit, on top of its several desirable theoretical properties.

\section{Conclusion and Future Work}

In this paper, we have proposed an incentive mechanism to solve the bike rebalancing problem with predicted task value for bike sharing systems. This mechanism, called TruPreTar, satisfies incentive compatibility, budget feasibility, individual rationality, and computational efficiency. It also provides theoretical guarantees on company revenue under different budget constraints.
Its performance was evaluated using simulations based on real-world data, and the results demonstrate its superiority in terms of both revenue and profit. In future work, we will extend our algorithm into a real-time decision-making mechanism, and conduct pilot studies in real cities.

\section{Acknowledgments}

This work was supported in part by Science and Technology Innovation 2030 ``New Generation Artificial Intelligence'' Major Project No. 2018AAA0100905, in part by China NSF grant No. 61972252, 61972254, 61672348, and 61672353, in part by Joint Scientific Research Foundation of the State Education Ministry No. 6141A02033702, and in part by Alibaba Group through Alibaba Innovation Research Program. The opinions, findings, conclusions, and recommendations expressed in this paper are those of the authors and do not necessarily reflect the views of the funding agencies or the government.
\bibliographystyle{aaai}
\bibliography{AAAI-LvH.3279}
\end{document}